\newcommand{\TIME}[1]{t_{\fontsize{8pt}{0cm}\selectfont \mathit{\texttt{#1}}}}
\newtheorem{theorem}{Theorem}[section]
\newtheorem{lemma}{Lemma}[section]
\newtheorem{definition}{Definition}[section]
\newtheorem{observation}{Observation}[section]
\newtheorem{corollary}{Corollary}[section]
\title{\textbf{Self-Stabilizing Weakly Byzantine Perpetual Gathering of Mobile Agents}}
\author[1]{Jion Hirose}
\author[2]{Ryota Eguchi}
\author[3]{Yuichi Sudo}
\affil[1]{National Institute of Technology, Ishikawa College}
\affil[2]{Nara Institute of Science and Technology}
\affil[3]{Hosei University}
\date{}
\begin{document}

\maketitle

\begin{abstract}
We study the \emph{Byzantine} gathering problem involving $k$ mobile agents with unique identifiers (IDs), $f$ of which are Byzantine.
These agents start the execution of a common algorithm from (possibly different) nodes in an $n$-node network, potentially starting at different times.
Once started, the agents operate in synchronous rounds.
We focus on \emph{weakly} Byzantine environments, where Byzantine agents can behave arbitrarily but cannot falsify their IDs.
The goal is for all \emph{non-Byzantine} agents to eventually terminate at a single node simultaneously.

In this paper, we first prove two impossibility results: (1) for any number of non-Byzantine agents, no algorithm can solve this problem without global knowledge of the network size or the number of agents, and (2) no self-stabilizing algorithm exists if $k\leq 2f$ even with $n$, $k$, $f$, and the length $\Lambda_g$ of the largest ID among IDs of non-Byzantine agents, where the self-stabilizing algorithm enables agents to gather starting from arbitrary (inconsistent) initial states.
Next, based on these results, we introduce a \emph{perpetual gathering} problem and propose a self-stabilizing algorithm for this problem.
This problem requires that all non-Byzantine agents always be co-located from a certain time onwards.
If the agents know $\Lambda_g$ and upper bounds $N$, $K$, $F$ on $n$, $k$, $f$, the proposed algorithm works in $O(K\cdot F\cdot \Lambda_g\cdot X(N))$ rounds, where $X(n)$ is the time required to visit all nodes in a $n$-nodes network.
Our results indicate that while no algorithm can solve the original self-stabilizing gathering problem for any $k$ and $f$ even with \emph{exact} global knowledge of the network size and the number of agents, the self-stabilizing perpetual gathering problem can always be solved with just upper bounds on this knowledge.
\end{abstract}

\section{Introduction}
\label{sec:Introduction}
\subsection{Background}
\label{subsec:Background}
Mobile agents, or simply agents, are software programs that can autonomously traverse a network, modeled as an undirected graph. 
We focus on the gathering problem which requires multiple agents, initially scattered throughout the network, to eventually terminate at a single node simultaneously.
This problem is crucial for information exchange and enabling more complex behavior, and has been studied extensively \cite{Pelc2019}.

In the literature, \emph{transient} and \emph{Byzantine} faults are considered in the gathering problem.
The transient fault model involves temporary memory corruption, erroneous initialization, or other similar issues in faulty agents.
To overcome transient faults, Ooshita, Datta, and Masuzawa \cite{Ooshita2017} proposed a self-stabilizing algorithm that enables agents to gather from arbitrary (inconsistent) initial states, specifically addressing the gathering problem for two agents ($k=2$), known as the \emph{rendezvous} problem.
They also claim that the algorithm can be extended to a self-stabilizing algorithm for cases where $k\geq 2$, if $k$ is given to agents as global knowledge. 
In the Byzantine fault model, some agents incur Byzantine faults, which are called Byzantine agents, and they can behave arbitrarily against their algorithms; 
thus, this fault model includes other agent faults.
In this paper, we assume that Byzantine agents cannot falsify their identifiers (IDs), which is a common assumption in many existing works.
Byzantine agents under this restriction are sometimes referred to as \emph{weakly} Byzantine agents in the literature, but we simply refer to them as Byzantine agents throughout this paper for simplicity.
To tolerate Byzantine faults, many gathering algorithms \cite{Dieudonne2014, Hirose2022, Hirose2024} have been proposed.
These algorithms require that agents initially know the number $n$ of nodes or a given upper bound $N$ on $n$.
In particular, Dieudonn\'{e}, Pelc, and Peleg \cite{Dieudonne2014} proposed an algorithm for the Byzantine environment if agents know $n$.

Both algorithms in \cite{Dieudonne2014} and \cite{Ooshita2017} share a common characteristic: 
they require global knowledge of the network size or the number of agents to solve the problem. 
The authors in \cite{Ooshita2017} show that without $k$, no self-stabilizing algorithm can solve the gathering problem.
In the Byzantine fault model, it is unclear if the problem is solvable without $n$ or $F$.
Additionally, although these faults have received much attention so far, to the best of our knowledge, no studies consider both fault models \emph{simultaneously} for agent systems.

\label{subsec:OurContribution}
\begin{table}[t]
  \caption{A summary of synchronous Byzantine or self-stabilizing gathering algorithms, assuming that agents have unique IDs. Here, $n$ is the number of nodes, $k$ is the total number of agents, $K$ is a given upper bound of $k$, $f$ is the number of Byzantine agents, $F$ is a given upper bound of $f$, and $\Lambda_g$ is the length of the largest ID among IDs of non-Byzantine agents. Symbols \texttt{REN}, \texttt{GAT}, \texttt{PGAT}, and GK represent rendezvous, gathering, perpetual gathering, and global knowledge, respectively.}
  \label{tab:Contributions}
  \newlength{\myheighta}
  \setlength{\myheighta}{18pt}
  \newlength{\myheightb}
  \setlength{\myheightb}{36pt}
  \centering
  \scalebox{0.85}{
  \begin{tabular}{c|c|c|c|c|c|c}
    \hline
    \parbox[c][\myheighta][c]{0cm}{}Problem&\texttt{REN}&\multicolumn{3}{c|}{\texttt{GAT}}&\multicolumn{2}{c}{\texttt{PGAT}}\\\hline
    \parbox[c][\myheighta][c]{0cm}{}Byzantine&No&No&\multicolumn{2}{c|}{Yes}&Yes&Yes\\\hline
    \parbox[c][\myheighta][c]{0cm}{}Self stabilizing&Yes&Yes&No&Yes&No&Yes\\\hline
    \parbox[c][\myheightb][c]{0cm}{}\multirow{4}{*}{With GK}&trivial&\begin{tabular}{c}Possible\\\cite{Ooshita2017}\end{tabular}&\begin{tabular}{c}Possible\\\cite{Dieudonne2014}\end{tabular}&\begin{tabular}{c}\textbf{Impossible} \\\textbf{for} $k\leq 2f$\\(Result 2)\end{tabular}&trivial&\begin{tabular}{c}\textbf{Possible}\\(Result 3)\end{tabular}\\\cdashline{2-7}[1pt/1pt]
    \parbox[c][\myheightb][c]{0cm}{}&-&$k$&$n$ or $F$&\begin{tabular}{c}even with \\ $n$, $k$, $f$, and $\Lambda_g$\end{tabular}&-&$N$, $K$, $F$, $\Lambda_g$\\\hline
    \parbox[c][\myheightb][c]{0cm}{}Without GK&\begin{tabular}{c}Possible \\\cite{Ooshita2017}\end{tabular}&\begin{tabular}{c}Impossible \\\cite{Ooshita2017}\end{tabular}&\begin{tabular}{c}\textbf{Impossible} \\(Result 1)\end{tabular}&\begin{tabular}{c}Impossible \\\cite{Ooshita2017}\end{tabular}&\begin{tabular}{c}Possible \\\cite{Dieudonne2014}\end{tabular}&- \\\hline
  \end{tabular}
  }
\end{table}

\subsection{Our Contribution}
In this paper, we investigate the following questions: 
(1) is the gathering problem solvable in weakly Byzantine environments \textit{without} global knowledge of the network size or the number of (Byzantine) agents? 
(2) is there a \emph{self-stabilizing} algorithm for the gathering problem? 
(3) if not, is there a relaxed variant of the gathering problem that allows the existence of the self-stabilizing algorithm? 

For the first two questions, we demonstrate the following impossibility results for the gathering problem in weakly Byzantine environments: 
(i) for any number of non-Byzantine agents, no algorithm can solve this problem \textit{without} any global knowledge of the network size or the number of agents, and (ii) when $k\leq 2f$, no self-stabilizing algorithm exists for this problem, even if agents know the number $n$ of nodes, the number $k$ of agents, the number $f$ of Byzantine agents, and the length $\Lambda_g$ of the largest ID among IDs of non-Byzantine agents.
Note that the impossibility proof for (i) is based on the scenario where only one non-Byzantine agent exists in the network, the agent should claim the termination of the algorithm.
This assumption is common in fault-tolerant gathering problems (e.g., \cite{Dieudonne2014, Pelc2018}).

Based on these impossibilities, we propose a relaxed variant of the gathering problem, called \emph{perpetual gathering} problem, in weakly Byzantine environments.
This problem requires that all non-Byzantine agents always be co-located from a certain time $t$ onwards.
Unlike the original gathering problem, this does not require the agents to stay at a single node permanently from some time;
they may move but must do so together along the same edge after time $t$.
The term ``perpetual gathering'' is inspired by the famous ``perpetual exploration'' problem, where an agent repeatedly visits all nodes in the graph (cf. \cite{Das2019}).

For the perpetual gathering problem, if the agents know $\Lambda_g$ and upper bounds $N, K, F$ on $n, k, f$, we propose a self-stabilizing algorithm for any $k$ and $f$ with time complexity of $O(K\cdot F\cdot \Lambda_g\cdot X(N))$, where $X(n)$ represents the time required to visit all nodes in a $n$-nodes network.
For example, $X(n)=O(n)$ \cite{Koucky2002} for a cycle, a clique, and a tree.
For an arbitrary graph, $X(n)=n^5\log n$ \cite{Reingold2008}.
When each node is equipped with a local memory (i.e., a whiteboard), $X(n)=O(m+nD)$, where $m$ is the number of edges and $D$ is the diameter of the graph \cite{Sudo2020}.

Our results indicate that while no algorithm can solve the original self-stabilizing gathering problem for any $k$ and $f$, even with \emph{exact} global knowledge of the network size and the number of agents, the self-stabilizing perpetual gathering problem can always be solved with just upper bounds on this knowledge.
Note that the (non-self-stabilizing) perpetual gathering problem without global knowledge in weakly Byzantine environments can be solved using an algorithm proposed by Dieudonn{\'{e}} et al.~\cite{Dieudonne2014}.
If the termination detection mechanism is removed from this algorithm, global knowledge becomes unnecessary.
In this case, agents continue to move, but after a certain time, all non-Byzantine agents meet at the same node.
This can be interpreted as solving the perpetual gathering problem without global knowledge.
We summarize our contribution and existing results in Table \ref{tab:Contributions}.

\subsection{Related Work}
\label{subsec:RelatedWorks}
Many researchers considered rendezvous and gathering problems in various scenarios, such as synchrony, anonymity, network topology, randomness, and so on, and clarified the solvability.
They also proposed various solutions to reduce various costs (e.g., time complexity, the number of moves, memory capacity, etc.).
In this paper, we focus on deterministic solutions for rendezvous and gathering problems in the scenario where agents have unique IDs and move synchronously; 
Pelc \cite{Pelc2019} extensively surveyed these solutions, including those for anonymous agents and asynchronous environments.
A comprehensive survey of the randomized solutions can also be found in the book by Alpern and Gal \cite{Alpern2002}.

In such a scenario, the majority of the results are for fault-free settings.
These results exploited agent IDs to break the symmetry. 
Several studies \cite{Dessmark2006,Kowalski2008,Ta-Shma2014} proposed algorithms to solve rendezvous problems for arbitrary graphs if agents do not know the number of nodes $n$ and start with a delay of at most $\tau$ rounds.
Dessmark et al.~\cite{Dessmark2006} presented the first deterministic algorithm, with polynomial time in $n$, $\tau$, and $\lambda$, where $\lambda$ is the length of the smallest ID among agent IDs.
Kowalski and Malinowski \cite{Kowalski2008} proposed an algorithm whose time complexity is independent of $\tau$ and is polynomial in $n$ and $\lambda$.
Ta{-}Shma and Zwick \cite{Ta-Shma2014} provided an algorithm with lower time complexity that is polynomial in $n$ and $\lambda$.
Miller and Pelc \cite{Miller2016} investigated tradeoffs between the number of rounds and the total number of edge traversals until the meeting.

Elouasbi and Pelc \cite{Elouasbi2017} introduced a new problem (called gathering with detection), which requires agents to declare the termination simultaneously when the meeting is perceived only through communication.
They proposed an algorithm for this problem on arbitrary graphs in polynomial time in $n$ and $\lambda$ if agents use beeps as their communication method and do not know $n$.
Bouchard et al.~\cite{Bouchard2023} considered a weaker model where each agent only detects the number of agents located at the same node in a round.
They provided two types of algorithms to achieve the gathering with detection on arbitrary graphs in this model:
one in polynomial time in $N$ and $\lambda$ when agents know the upper bound $N$ of $n$, and the other in exponential time in $n$ when agents do not know $n$.
Molla et al.~\cite{Molla2023} proposed a faster algorithm to solve the gathering with detection on arbitrary graphs.
Their algorithm is faster than the existing algorithms for even just gathering if agents do not know $n$ and start simultaneously, with polynomial time in $n$.

Recently, the gathering under various types of agent faults has been studied, particularly the gathering problem in the presence of Byzantine agents.
This problem assumes that the network topology is arbitrary, there are $k$ agents, and $f$ of them are Byzantine.
Studies addressing these problems considered two types of Byzantine agents: strongly and weakly Byzantine.
Strongly Byzantine agents can ignore the algorithm and behave arbitrarily, while weakly Byzantine agents can do the same except falsify their own IDs. 
Dieudonn{\'{e}} et al.~\cite{Dieudonne2014} designed the first gathering algorithms to tolerate Byzantine agents.
They proposed algorithms to solve the gathering for weakly Byzantine agents in polynomial time in $n$ and $\Lambda_g$, where $\Lambda_g$ is the length of the largest ID among IDs of non-faulty agents, both if agents know $n$ and $k\geq f+1$ holds, and if agents know the upper bound $F$ of $f$ and $k\geq 2F+2$ holds. 
These algorithms match the lower bounds on the number of non-faulty agents required.
They also provided algorithms to solve the gathering for strongly Byzantine agents in exponential time in $n$ and $\Lambda_g$, both if agents know $n$ and $F$ and $k\geq 3F+1$ holds, and if agents know $F$ and $k\geq 5F+2$ holds.
However, the lower bounds in these cases are $F+1$ and $F+2$, respectively, which means that the required numbers of non-faulty agents required by these algorithms do not match these lower bounds.
Bouchard et al.~\cite{Bouchard2016} improved the upper bounds to $k\geq 2F+1$ if agents know $n$ and $F$, and to $k\geq 2F+2$ if agents know $F$.
Bouchard et al.~\cite{Bouchard2022} provided an algorithm that achieves the gathering in the presence of strongly Byzantine agents in polynomial time in $N$ and $\lambda$ if agents have global knowledge of $O(\log\log\log N)$ and $k\geq 5f^2+7f+2$ holds.
Hirose et al.~\cite{Hirose2022} shown a gathering algorithm that tolerates weakly Byzantine agents if agents know $N$ and $k\geq 4f^2+9f+4$ holds, with time complexity smaller than \cite{Dieudonne2014} and polynomial in $N$, $f$, and $\Lambda_a$, where $\Lambda_a$ is the length of the largest ID among agent IDs.
Additionally, Hirose et al.~\cite{Hirose2024} proposed a similar algorithm if agents know $N$ and $k\geq 9f+8$ holds, with time complexity greater than \cite{Hirose2022} but smaller than \cite{Dieudonne2014}.
Tsuchida et al.~\cite{Tsuchida2018} reduced the time complexity of the gathering algorithm to tolerate weakly Byzantine agents by assuming authenticated whiteboards, where each node has dedicated memory for each agent to leave information if agents know $F$ and $k\geq F$ holds.
Their algorithm works in polynomial time in $F$ and $m$, where $m$ is the number of edges.
Tsuchida et al.~\cite{Tsuchida2020} showed an algorithm to solve the gathering problem in asynchronous environments with weakly Byzantine agents by using authenticated whiteboards.
Miller and Saha \cite{Miller2020} considered the case where agents can obtain information in the subgraph induced by nodes within a radius of the network from their current node.
Their proposed algorithm solves the gathering problem in the presence of strongly Byzantine agents in polynomial time in $k$ and $n$ if $k\geq 2f+1$ holds.

Several studies looked at other types of agent faults.
Chalopin et al.~\cite{Chalopin2016} studied delay faults, where faulty agents remain at the current node for a finite number of rounds regardless of their plans to move.
Pelc \cite{Pelc2018} considered the gathering problem with crash faults, where some agents suddenly become immobile at the node in some round, for systems where each agent moves at constant speed but their speeds differ.
Ooshita et al.~\cite{Ooshita2017} proposed a self-stabilizing algorithm to achieve the rendezvous when agents start with arbitrary states of their memory.
This self-stabilizing algorithm guarantees that agents eventually meet at a single node even if their memory experiences any kind of corruption, i.e., it counteracts transient faults in agent memory.

\section{Preliminaries}
\subsection{Model}
The system is represented as a simple, connected, and undirected graph $G=(V,E)$ with $n=|V|$ nodes.
Nodes have no ID.
Each node $v\in V$ assigns unique port numbers in $\{1,\dots,d(v)\}$ to its edges, where $d(v)$ is the degree of $v$.
Port numbers are local, meaning edges $(v,u)$ and $(u,v)$ for nodes $v$ and $u$ may have different numbers.

There are $k$ agents in the system, denoted by set $A$.
Each agent $a\in A$ has a unique ID $a.id$ but knows only its own ID.
We denote ``an agent with an ID in an ID set $S_{id}$'' as ``an agent in $S_{id}$'' for brevity.
Agents have unbounded memory but cannot leave any information on nodes or edges.
Among the $k$ agents, exactly $f$ are Byzantine, which means they can act arbitrarily but cannot falsify their IDs; 
thus, when a Byzantine agent $b$ communicates with another agent $a$, $a$ can correctly identify $b$'s ID. 
These agents are commonly referred to as \emph{weakly Byzantine agents} and have been studied in the literature, including \cite{Dieudonne2014,Hirose2022,Hirose2024}.
We call all the non-Byzantine agents \emph{good} agents.
We denote the length of the largest ID in IDs of good agents as $\Lambda_g$.

An algorithm $\mathcal{A}$ for an agent $a$ is defined as $\mathcal{A}(a.id, n',k',f',\Lambda_g')=(S,\delta,s_{ini},S_{ter})$, where $S$ is a set of states, $\delta$ is a function for transitioning its state, $s_{ini}\in S$ is the special state, and $S_{ter}$ is a set of terminal states.
Algorithm $\mathcal{A}$ takes five arguments, which are assigned integer values.
The last four arguments correspond to $n, k, f$, and $\Lambda_g$, and each value is assumed to be assigned if required by the algorithm. 
We refer to the arguments $n', k', f'$ and $\Lambda_g'$ as \emph{global knowledge}, and we assume that each of these arguments takes the same value for all agents in the system.
In the above tuple, only the function $\delta$ depends on $a.id$, which means that the states of all agents are identical.
States are represented by a tuple of variables of $a$; 
if its variables contain associative arrays, we refer to these simply as arrays.
State $s_{ini}$ represents one where all variables of $a$ are initialized.
A subset $S_{ter}$ will be defined later. 

A \emph{configuration} in the system is defined as a combination of the locations, the states, and the IDs of all agents in $A$.
We define $C_{all}$ as the set of all possible configurations in which at least one good agent is not in a dormant state.
An agent in a dormant state, or simply a dormant agent, transitions into $s_{ini}$ when (a) the adversary wakes up the dormant agent, or (b) a non-dormant agent visits the node with the dormant agent.
Non-dormant agents execute a common algorithm in synchronous and discrete time steps, called \emph{rounds}.
We let $C_{ini}$ be a set of configurations in $C_{all}$ where each agent is either dormant or in $s_{ini}$.

In each round $r$, each non-dormant agent $a_i$ at $v$ performs the following operations sequentially:
\begin{enumerate} 
  \item[(1)] It acquires $d(v)$, the port number $p_{in}$ through which it arrived at $v$ in $r-1$ (if it stayed in $r-1$, $p_{in}=\bot$), and states of agents (including $a_i$) at $v$. 
  We define $\mathit{MG}_i$ as the set of $a_i$ and agents that $a_i$ observes at $v$ in $r$.
  \item[(2)] It updates its next state $s'$ using $\delta$ with $d(v)$, $p_{in}$, the states of all agents in $\mathit{MG}_i$, and the arguments of $\mathcal{A}$, deciding whether to stay or leave, and determining the outgoing port $p_{out}$.
  If it decides to stay, $p_{out} = \bot$. 
  \item[(3)] It either stays at $v$ or moves to the destination node via $p_{out}$ by $r+1$.
\end{enumerate}
Note that in Operation (1), every agent in $\mathit{MG}_i$ has to share its states with others before any agent in $\mathit{MG}_i$ computes $\delta$, but Byzantine agents in $\mathit{MG}_i$ can falsify the values of their variables to others.
We assume that all agents in $\mathit{MG}_i$ observe the same state from any Byzantine agent in $\mathit{MG}_i$, i.e., Byzantine agents share the same (possibly falsified) states with all co-located agents.
This assumption is known as \emph{shouting} and has been adopted in previous studies on Byzantine gathering (e.g., \cite{Dieudonne2014,Bouchard2016,Bouchard2022,Hirose2022,Hirose2024}).
Without this assumption, a Byzantine agent could send different states to different co-located agents.
This makes it significantly harder for co-located good agents to detect inconsistencies or coordinate their actions. 
Note also that in Operation (3), if two agents pass the same edge in opposite directions simultaneously, they do not notice this fact.
If an agent enters a state in $S_{ter}$ at node $v$ in round $r$, $\delta$ outputs $p_{out} = \bot$ and $s' \in S_{ter}$ in every round from $r$ onwards.
Since the terminal states depend on the global knowledge, we sometimes denote $S_{ter}$ by $S_{ter}(n', k', f', \Lambda'_g)$ if needed.
An execution starting from $c_0\in C_{all}$ is a sequence $c_0, c_1, c_2, \dots$, such that each configuration at round $r$ is determined (deterministically) by applying the procedures described above to all the good agents and determining the behavior of the Byzantine agents by the adversary. 

\subsection{Problems}
\label{subsec:Problems}
\paragraph*{Gathering problem.} 
This problem requires all good agents to eventually enter a state in $S_{ter}$ at a single node simultaneously.

\begin{definition}
\label{def:GatheringAlgorithm}
For a graph $G = (V, E)$ and $C^* \subseteq C_{all}$, an algorithm $\mathcal{A}_{\mathit{g}}$ solves the gathering problem in $G$ from $C^*$ if all good agents, executing $\mathcal{A}_{\mathit{g}}$ in $G$, eventually enter terminal states at the same node simultaneously, starting from any initial configuration of $C^*$.
We say an algorithm $\mathcal{A}_g$ solves a gathering problem (\texttt{GAT}) if for any $G = (V, E)$ and $C_{ini}$, it solves the problem in $G$ from any of $C_{ini}$. 
Similarly, we say an algorithm $\mathcal{A}_{ssg}$ solves a self-stabilizing gathering problem (\texttt{SS-GAT}) if for any $G = (V, E)$ and $C_{all}$, it solves the problem in $G$ from any of $C_{all}$.
\end{definition}

\paragraph*{Perpetual gathering problem.} 
This problem requires all good agents to always be co-located from a certain round $r$ onwards.
Unlike the gathering problem, this problem does not require that these agents stay at a single node permanently;
they may move but must do so together along the same edge after round $r$.

\begin{definition}
\label{def:SSPGAlgorithm}
For a graph $G = (V, E)$ and $C^* \subseteq C_{all}$, an algorithm $\mathcal{A}_{\mathit{pg}}$ solves the perpetual gathering problem (\texttt{PGAT}) in $G$ from $C^*$ if all good agents, executing $\mathcal{A}_{\mathit{pg}}$ in $G$, always stay at a single node from a certain round onwards, starting from any initial configuration of $C^*$.
We say an algorithm $\mathcal{A}_{sspg}$ solves a self-stabilizing gathering problem (\texttt{SS-PGAT}) if for any $G = (V, E)$ and $C_{all}$, the algorithm solves the perpetual gathering problem in $G$ from any of $C_{all}$.
\end{definition}

\paragraph*{Global knowledge and rounds.}

\begin{definition}
For a problem \texttt{P} $\in \{ \texttt{GAT, PGAT, SS-GAT, SS-PGAT} \}$, we say an algorithm $\mathcal{A}$ solves \texttt{P} with global knowledge $n, k, f$ and $\Lambda_g$ if $\mathcal{A}(id, n', k', f', \Lambda'_g)$ solves \texttt{P} with the assumption $(n', k', f', \Lambda'_g) = (n, k, f, \Lambda_g)$. 
Similarly, we say $\mathcal{A}$ solves \texttt{P} with global knowledge $N, K, F$ and $\Lambda_g$ if $\mathcal{A}(id, n', k', f', \Lambda'_g)$ solves \texttt{P} with the assumption $(n', k', f', \Lambda'_g) = (N, K, F, \Lambda_g)$ for all $N, K, F$ such that $n \le N, k \le K, f \le F$.
Additionally, we say an algorithm $\mathcal{A}$ solves \texttt{P} without the global knowledge, if $\mathcal{A}(id, n', k', f', \Lambda'_g)$ solves \texttt{P} for any values of $(n', k', f', \Lambda'_g)$.
\end{definition}

We evaluate the time complexity of each algorithm by the number of rounds required for any execution to reach the target configuration from the first configuration in which at least one good agent is not dormant.

\subsection{Existing Algorithm}
In the proposed algorithm, we incorporate a rendezvous algorithm as a subroutine to ensure that an agent meets other agents.
This algorithm, due to Ta{-}Shma and Zwick~\cite{Ta-Shma2014}, enables two agents to meet in any undirected and connected graph of $n$ nodes, starting from different nodes.
We refer to this algorithm as $\mathsf{REN}(\mathit{label})$, where $\mathit{label}$ is a positive integer given as input.
When an agent $a_i$ starts $\mathsf{REN}(\mathit{label}_i)$ for some positive integer $\mathit{label}_i$, it alternates between exploring and waiting periods based on $\mathit{label}_i$.
The agent executes an exploring algorithm using a universal exploration sequence, due to Reingold\cite{Reingold2008}, during the exploring period and waits at the node for the duration required by the exploring algorithm during the waiting period.
The scheduling of these periods ensures that two agents with different positive integers meet, typically when one agent waits while the other explores.
When two agents with distinct positive integers $\mathit{label}_i$ and $\mathit{label}_j$ execute this procedure, this time complexity of $\mathsf{REN}(\mathit{label})$, denoted by $\TIME{REN}(\mathit{label})$, is $\Tilde{O}(n^5\cdot\lambda)$, where $\lambda$ is the length of $\min(\mathit{label}_i,\mathit{label}_j)$, and $\Tilde{O}$ hides a poly-logarithimic factor of $n$.
We have the following theorem for this algorithm.

\begin{theorem} (Theorems 2.3 and 3.2 of Ta{-}Shma and Zwick \cite{Ta-Shma2014})
\label{the:RendezvousAlgorithm}
Let $a_i$ and $a_j$ be two agents, and $\mathit{label}_i$ and $\mathit{label}_j$ be positive integers such that $\mathit{label}_i\neq \mathit{label}_j$ holds.
If $a_i$ and $a_j$ start $\mathsf{REN}(\mathit{label}_i)$ and $\mathsf{REN}(\mathit{label}_j)$ in rounds $r_i$ and $r_j$, respectively, they meet at a single node by round $\max(r_i,r_j)+\TIME{REN}(\min(\mathit{label}_i,\mathit{label}_j))$.
\end{theorem}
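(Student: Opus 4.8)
The plan is to treat $\mathsf{REN}$ as the explore/wait schedule that Ta{-}Shma and Zwick build on top of a universal exploration sequence (UES), and to argue that any two distinct labels force a time window in which one agent traverses the whole graph while the other sits still. I would first make the schedule explicit: for an agent with label $\ell$, partition time into consecutive \emph{blocks}, each long enough for one full UES-based traversal of an $n$-node graph. By Reingold's result this traversal length is $O(n^5\log n)$, which (absorbing the $\log$ factor) is the $\tilde{O}(n^5)$ cost per block. Within each block the agent is either \emph{exploring} --- running the UES so that it provably visits every node at least once during the block --- or \emph{waiting}, staying put for the block's duration. The bit pattern of a suitable transform of $\ell$ dictates the explore/wait choice per block, and the number of blocks needed is $O(\lambda)$, where $\lambda$ is the length of $\min(\mathit{label}_i,\mathit{label}_j)$; this is the source of the factor linear in the label length inside $\TIME{REN}$.

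The core step is the \emph{differing-bit} argument. Since $\mathit{label}_i \neq \mathit{label}_j$, after the encoding the two bit patterns must disagree in some position, and I would use this to exhibit a block in which exactly one agent is exploring and the other is waiting. During such a block the exploring agent runs a complete UES traversal, hence visits the node at which the waiting agent rests; because Operation~(1) of the model lets an agent observe the states of all agents at its node, co-location is detected and the two agents meet. The delicate point is that the agents need not be block-aligned: with distinct start rounds $r_i$ and $r_j$, agent $i$'s blocks are offset from agent $j$'s. To absorb this I would make each waiting period last long relative to an exploring period (the standard padding in this line of work), so that a full exploring traversal of one agent is contained in time within a single waiting period of the other whenever their schedule bits cooperate. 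This guarantees the traversal overlaps a genuinely stationary target rather than chasing a moving one.

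Finally I would bound the meeting round. The schedule is designed so that the first block on which the two patterns are guaranteed to disagree in the required explore-versus-wait manner occurs within the first $O(\lambda)$ blocks, and each block costs $\tilde{O}(n^5)$ rounds; hence the meeting happens within $\tilde{O}(n^5\cdot\lambda)=\TIME{REN}(\min(\mathit{label}_i,\mathit{label}_j))$ rounds after the later-starting agent has begun, i.e.\ by round $\max(r_i,r_j)+\TIME{REN}(\min(\mathit{label}_i,\mathit{label}_j))$. I expect the main obstacle to be precisely this offset/delay handling: ensuring that the favorable explore-wait block of the slower agent is not wasted because the faster agent started earlier and has drifted out of phase. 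Controlling this requires the padding argument above together with a careful case analysis on which agent starts first, and it is exactly the content that Theorems~2.3 and~3.2 of \cite{Ta-Shma2014} establish; I would either invoke those theorems directly or reproduce their phase-alignment lemma.
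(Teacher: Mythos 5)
This statement is imported verbatim from Ta-Shma and Zwick (their Theorems 2.3 and 3.2); the paper gives no proof of it, only the citation, so there is no in-paper argument to compare yours against. Your sketch of the explore/wait block schedule with a differing-bit argument is consistent with the paper's own informal description of $\mathsf{REN}$ and with the standard intuition for label-based rendezvous, but as a standalone proof it leaves open exactly the two hard points: (i) why the bound depends on the length of the \emph{minimum} of the two labels rather than on the position of the first disagreement after encoding, and (ii) why an arbitrary start-round offset $|r_i-r_j|$ cannot perpetually misalign the explore and wait phases --- the Ta-Shma--Zwick resolution of (ii) is not a padding ratio but the construction of \emph{strongly} universal exploration sequences, in which every sufficiently long contiguous window is itself an exploration sequence, so no phase alignment is needed. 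Since you explicitly defer those points to the cited theorems, your proposal in effect does the same thing the paper does, namely invoke the external result; that is acceptable here, but be aware that the reconstruction you outline would not close without the strong-universality machinery.
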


\noindent We denote the $t$-th round of $\mathsf{REN}(id)$ by $\mathsf{REN}(id,t)$ for integer $t\geq 0$.

\section{Impossibility results}
\label{sec:ImpossibilityResults}
In this section, we present the following impossibility results for the solvabilities of the \texttt{GAT} and the \texttt{SS-GAT}: there is no gathering algorithm without global knowledge, and there is no self-stabilizing gathering algorithm with $n$, $k$, $f$, and $\Lambda_g$.

\subsection{No gathering algorithm without global knowledge}
In this section, we show that no algorithm solves the \texttt{GAT} without global knowledge.
To establish this result, we prove that without global knowledge, an agent cannot visit all nodes of a ring and declare the termination. 
While this proof is well-known, we provide a rigorous proof for the result since it is critical for demonstrating the impossibility of solving the \texttt{GAT}.

\begin{lemma}
\label{lem:NoExplorationAlgorithmsWithoutGlobalKnowledge}
There is no algorithm without global knowledge for a single agent to visit all nodes of a ring and declare the termination after visiting all nodes.
\end{lemma}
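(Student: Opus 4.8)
The plan is to argue by contradiction using an indistinguishability argument on a family of rings of increasing size. Suppose some algorithm $\mathcal{A}$, run without global knowledge (i.e.\ $\mathcal{A}(id, n', k', f', \Lambda'_g)$ is required to work for every assignment of its arguments), lets a single agent visit all nodes of an arbitrary ring and then declare termination. A single agent sees no other agents, so its behavior is entirely determined by the local information it collects as it walks: at each node it learns the degree (always $2$ in a ring) and the incoming port. The key observation is that in a ring, up to the symmetry of port labelings, the local view the agent accumulates along a walk is the same regardless of the total number of nodes, until the moment the agent's walk ``wraps around'' and revisits a node. Concretely, I would fix the agent's ID and the (arbitrary but fixed) values of its global-knowledge arguments, and track the finite sequence of transitions $\delta$ produces.

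First I would make precise what the agent can observe. Since there is only one agent, $\mathit{MG}_i = \{a_i\}$ in every round, so the transition function in round $r$ depends only on $a_i$'s current state, the degree $d(v) = 2$, and the arrival port $p_{in} \in \{1, 2, \bot\}$. This means the agent is effectively a finite-or-infinite automaton walking a ring whose only environmental input is the port through which it arrived. I would then construct two rings that the agent cannot distinguish up to the declared termination: take the ring $R_n$ on which $\mathcal{A}$ supposedly works, let $t$ be the round at which the agent declares termination, and build a strictly larger ring $R_{n'}$ with $n' > n$ together with a port assignment that reproduces, node-by-node along the agent's trajectory, exactly the same sequence of $(d(v), p_{in})$ pairs the agent sees in $R_n$ up through round $t$. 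Because the agent's state evolution is a deterministic function of this observation sequence, its trajectory and its decision to terminate in $R_{n'}$ coincide with those in $R_n$ for all rounds up to $t$.

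The heart of the construction is choosing $R_{n'}$ and its port labels so that the agent, following the same port-sequence, physically traverses at most $n$ distinct nodes of $R_{n'}$ before terminating, leaving at least one node of $R_{n'}$ unvisited; this is possible precisely because $n' > n$ gives extra ``room'' into which the trajectory never reaches before time $t$. Then in $R_{n'}$ the agent terminates having failed to visit all nodes, contradicting the assumed correctness of $\mathcal{A}$ on the ring $R_{n'}$. The main obstacle I anticipate is handling the port-symmetry carefully: because port numbers are local and may differ on the two endpoints of an edge, I must define the port assignment on $R_{n'}$ so that the induced arrival-port sequence matches that of $R_n$ exactly, which requires tracking whether the agent oscillates or advances monotonically and ensuring the ``unvisited'' segment is inserted on the far side of the ring relative to the agent's reachable region. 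A clean way to do this is to let the agent's first $t$ rounds visit a contiguous arc of $R_{n'}$ of length at most $n$, replicate $R_n$'s labels on that arc, and close up the remaining (longer) arc arbitrarily; I would verify that the state at each visited node and the arrival ports agree inductively on $r$, so that termination at round $t$ transfers verbatim to $R_{n'}$.
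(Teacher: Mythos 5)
Your overall strategy---contradiction via indistinguishability of two rings of different sizes---is exactly the paper's, but there is a concrete error in the quantitative step that makes the construction fail as written. You choose the larger ring $R_{n'}$ with $n' > n$ and claim the agent ``physically traverses at most $n$ distinct nodes of $R_{n'}$ before terminating,'' proposing to confine the first $t$ rounds to ``a contiguous arc of length at most $n$.'' This is false in general: what bounds the reach of the walk is the termination time $t$, not $n$. The agent's trajectory in $R_n$ may wrap around the ring many times before it terminates; it visits at most $n$ distinct nodes of $R_n$, but its unrolled walk on the universal cover spans up to $t+1$ consecutive positions, and $t$ can be arbitrarily larger than $n$. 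Projected onto $R_{n'}$, that same port-sequence covers $\min(n', t+1)$ nodes, so whenever $t+1 \ge n'$ the agent may visit every node of $R_{n'}$ and the contradiction evaporates. The condition you need is $n' > t+1$ (i.e., pick the larger ring only after $t$ is known, with more nodes than the agent can possibly reach in $t$ rounds); this is precisely what the paper does by letting $T$ bound the termination time on the $3$-node ring and taking $\ell > T$.

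With that one change the argument goes through. The rest of your write-up is sound, and in fact the careful arc-by-arc replication of port labels you anticipate as the main obstacle can be avoided entirely: fix a consistent (say clockwise) port labeling on both rings from the outset, as the paper does. Then every node presents degree $2$ and the same incoming-port behavior, so the observation sequences of the single agent on the two rings are identical for all time, with no case analysis on whether the agent oscillates or advances monotonically.
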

\begin{proof}
We prove this lemma by contradiction.
Suppose such an algorithm $\mathcal{A}_e$ exists.
Let $R_3$ be a ring composed of 3 nodes, with each edge labeled with port numbers in the clockwise direction.
When an agent $a_i$ starts $\mathcal{A}_{e}$ from a node of $R_3$, it visits all nodes of $R_3$ and declares the termination within $T$ rounds for some integer $T$.
Consider a ring $R_\ell$ composed of $\ell > T$ nodes, where each edge is labeled with port numbers in the clockwise direction, similar to $R_3$.
If $a_i$ starts $\mathcal{A}_e$ from a node of $R_\ell$, $a_i$ declares the termination after $T$ rounds, despite not having visited all nodes in $R_\ell$, because $a_i$ cannot distinguish $R_3$ and $R_\ell$.
This contradiction proves the lemma.
\end{proof}

Next, we use the above lemma to prove that no algorithm can solve the \texttt{GAT} without global knowledge. 

\begin{theorem}
\label{the:ImpossibilityGathering}
For any number $g\geq 1$ of good agents, no algorithm solves the \texttt{GAT} without any global knowledge.
\end{theorem}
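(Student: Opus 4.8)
The plan is to reduce the statement to Lemma~\ref{lem:NoExplorationAlgorithmsWithoutGlobalKnowledge} through the single good agent scenario, exploiting the fact that an algorithm without global knowledge cannot know how many good agents are present. Intuitively, a lone good agent faces a dilemma: to solve the \texttt{GAT} when it is truly alone it must eventually enter a terminal state, yet it can never rule out the possibility that another good agent sits at some far, as-yet-unvisited node, in which case it must keep searching. I would turn this dilemma into a contradiction by showing that any correct algorithm forces a single good agent to visit every node of a ring before it may halt, which is exactly the exploration-with-termination task ruled out by Lemma~\ref{lem:NoExplorationAlgorithmsWithoutGlobalKnowledge}. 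Crucially, the construction will only ever use genuine good agents, so no Byzantine behaviour is needed; the impossibility stems purely from the unknown number of agents.

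Concretely, suppose for contradiction that some algorithm $\mathcal{A}_g$ solves the \texttt{GAT} without global knowledge. First I would run $\mathcal{A}_g$ with a single good agent $a$ on a ring $R_\ell$ and no other agents; since $\mathcal{A}_g$ must solve the \texttt{GAT} on this one-good-agent instance, $a$ enters a terminal state at some finite round $T$. The central step is to prove that $a$ has visited all nodes of $R_\ell$ by round $T$. Assume not: since an agent's visited set on a ring is a contiguous arc containing its start, there is a proper arc $I$ of visited nodes containing $v_0$ and a nonempty complementary arc $U$ that $a$ never enters. I would then build a second instance on a larger ring $R_{\ell'}$ obtained by keeping $I$ together with its two boundary edges intact and replacing $U$ by a much longer clockwise-labeled arc $U'$, and I would place a genuine second good agent $a'$ at a node of $U'$ at distance greater than $2T$ from both endpoints of $I$.

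By local indistinguishability on rings, $a$'s view while it remains inside $I$ is identical in $R_\ell$ and $R_{\ell'}$; hence $a$ follows the same trajectory, never leaves $I$, and again halts at round $T$ at the same node $u\in I$. Meanwhile $a'$ moves at most one edge per round, so at round $T$ it is still within distance $T$ of its start and therefore cannot occupy $u$; for the same reason the two agents never meet before round $T$, so $a$'s run is genuinely unperturbed. Since terminal states are absorbing and the \texttt{GAT} demands that all good agents enter terminal states at one common node in the same round, $a$ being locked at $u$ from round $T$ while $a'$ is elsewhere makes simultaneous co-located termination impossible. This contradicts the assumption that $\mathcal{A}_g$ solves the \texttt{GAT} on this two-good-agent instance, so $a$ must in fact visit every node of $R_\ell$ before halting.

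Finally, since this holds for every ring, the algorithm $\mathcal{A}_e$ that simply runs $\mathcal{A}_g$ as a single good agent and declares termination when $\mathcal{A}_g$ enters a terminal state visits all nodes and then halts on every ring, using no global knowledge; this contradicts Lemma~\ref{lem:NoExplorationAlgorithmsWithoutGlobalKnowledge}. Note that the argument legitimately mixes instances with one and with two good agents, which is exactly where the hypothesis ``for any number $g\ge 1$ of good agents'' is used: a global-knowledge-free algorithm must succeed on all such instances at once. The hard part will be making the indistinguishability fully rigorous---verifying that confining the modification to the unvisited arc $U$ leaves $a$'s entire view (degree, incoming port, and co-located states) unchanged at every round up to $T$, and that the distance bound together with unit-speed movement simultaneously prevents $a'$ from perturbing $a$ and from accidentally reaching $u$ by round $T$.
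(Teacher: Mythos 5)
Your proposal is correct and ends at the same contradiction with Lemma~\ref{lem:NoExplorationAlgorithmsWithoutGlobalKnowledge}, but it is organized differently from the paper's proof. The paper argues by a dichotomy over the assumed algorithm: either some ID always fully explores before terminating (in which case Lemma~\ref{lem:NoExplorationAlgorithmsWithoutGlobalKnowledge} is contradicted directly), or no ID does, in which case two ``bad'' graphs $G_i$ and $G_j$ for two \emph{different} IDs are glued by a single edge between their respective unvisited nodes, so the two good agents terminate in disjoint parts of the combined graph without meeting. You instead run a single linear argument confined to rings: a lone good agent must terminate; if it terminated without visiting all of $R_\ell$, you extend the unvisited arc and hide a second good agent deep inside the extension, so the first agent's execution is unchanged and it locks into an absorbing terminal state at a node the second agent cannot have reached by that round --- a valid contradiction because Definition~\ref{def:GatheringAlgorithm} requires all good agents to enter terminal states at the same node \emph{simultaneously}. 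This buys you two things: the hidden-agent construction needs only one ID's failure to explore rather than bad graphs for two distinct IDs, and staying within the ring family makes the indistinguishability step concrete (degree $2$ everywhere, consistent clockwise ports, unit-speed bound on the second agent), which the paper's Case~2 leaves implicit. Like the paper's proof, your argument mixes instances with one and with two good agents, so it establishes impossibility for the problem quantified over all instances rather than for each fixed $g$ in isolation; you state this explicitly, which the paper does not.
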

\begin{proof}
To derive a contradiction, we assume that such an algorithm exists $\mathcal{A}_g(id,n',k',f',\Lambda_g')$.
Since the algorithm works without any global knowledge, we denote it simply as $\mathcal{A}_g(id)$ in the following discussion.
We analyze two cases on the number $g$ of good agents.

\textbf{Case 1 ($g=1$):}
Suppose that there is only one good agent in the network.
Then, by the definition of the \texttt{GAT}, this agent eventually transitions into a terminal state at some node, regardless of whether it meets Byzantine agents.
Assume that there exists an ID $id^*$ such that, for any graph $G = (V, E)$ and any initial position $v \in V$, when the good agent with ID $id^*$ executes $\mathcal{A}_g(id^*)$, it always visits all nodes in $V$ before terminating.
Note that this must hold even if the Byzantine agent does not interact with the good agent.
This implies that there could exist an exploring algorithm with a single agent by simulating $\mathcal{A}_g$ with $id^*$ for $G$ and $v$. 
This contradicts Lemma \ref{lem:NoExplorationAlgorithmsWithoutGlobalKnowledge}.

\textbf{Case 2 ($g\geq 2$):}
Suppose that there are at least two good agents in the network.
Assume that no ID is guaranteed to visit all nodes in the network before terminating. 
That is, for every ID $id$, there exists a graph $G = (V, E)$ such that when a good agent starts $\mathcal{A}_g$, the good agent enters a terminal state \emph{before} visiting all nodes in $V$.
We first consider the case $g=2$.
Let $a_i$ and $a_j$ be good agents with different IDs.
By the assumption above, there exists a graph $G_i = (V_i, E_i)$ and a starting node $v_i \in V_i$ such that $a_i$, starting from $v_i$, terminates without visiting some node $u_i \in V_i$.
Similarly, there exists a graph $G_j = (V_j, E_j)$ and node $v_j \in V_j$ such that $a_j$, starting from $v_j$, terminates without visiting some node $u_j \in V_j$.
Consider a graph $G_{ij}=(V_{ij},E_{ij})$ where $V_{ij}=V_i\cup V_j$ and $E_{ij}=E_i\cup E_j\cup \{(u_i,u_j)\}$.
In $G_{ij}$, when $a_i$ and $a_j$ start $\mathcal{A}_g(a_i.id)$ and $\mathcal{A}_g(a_j.id)$ from $v_i$ and $v_j$ respectively, $a_i$ never visits $u_i$ and $a_j$ never visits $u_j$.
Since $u_i$ and $u_j$ are the only connection between $V_i$ and $V_j$, two agents terminate at different nodes without ever meeting, which is a contradiction.
For the case $g \geq 3$, the same contradiction can be derived by applying the same construction used in the case $g = 2$ to any pair of agents.

Hence, no such algorithm $\mathcal{A}_g$ can exist for $g$.
\end{proof}

Note that some algorithms, e.g., considered in \cite{Dieudonne2014, Pelc2018}, have the basic property that they terminate even if there is only a good agent.

\subsection{No self-stabilizing gathering algorithm with \texorpdfstring{$n$, $k$, $f$ and $\Lambda_g$}{n, k, f and Lambda\_g}}
In this section, we prove that no algorithm can solve \texttt{SS-GAT} with $n$, $k$, $f$, and $\Lambda_g$.

\begin{theorem}
\label{the:ImpossibilitySSGathering}
Let $n$ be the number of nodes, $k$ be the number of agents, $f$ be the number of Byzantine agents, and $\Lambda_g$ be the length of the largest ID among IDs of good agents.
When $k\leq 2f$, no algorithm solves the \texttt{SS-GAT} even with the global knowledge $n$, $k$, $f$, and $\Lambda_g$.
\end{theorem}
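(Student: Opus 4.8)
The plan is to argue by contradiction: suppose some algorithm $\mathcal{A}$ solves \texttt{SS-GAT} with exact global knowledge $n,k,f,\Lambda_g$ for some instance with $k\le 2f$, and write $g=k-f$ for the number of good agents, so that $g\le f$. The heart of the argument will be an indistinguishability construction between two ``worlds'' that differ only in which agents are Byzantine, engineered so that $\mathcal{A}$ is forced into contradictory behavior. The two ingredients I would lean on are (i) the Byzantine majority $g\le f$, which supplies enough Byzantine agents to act as stand-ins for displaced good agents, and (ii) self-stabilization, i.e.\ the fact that $\mathcal{A}$ must gather correctly starting from \emph{any} configuration of $C_{all}$, which lets the adversary plant arbitrary initial states. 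I would stress up front that self-stabilization is indispensable here: with ordinary initialization the analogous statement is false (ordinary weakly Byzantine gathering is solvable already for $k\ge f+1$), so the proof must exploit arbitrary initial states in an essential way.

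First I would fix a set of $k$ distinct IDs, all of length at most $\Lambda_g$, with one distinguished ID $\iota^\*$ of length exactly $\Lambda_g$, and I would work in a large graph on exactly $n$ nodes with two far-apart regions $R_1$ and $R_2$. In World~P I designate a good set $\mathcal{G}_P\ni\iota^\*$ of size $g$, place the agent $\iota^\*$ together with $g-1$ further agents in $R_1$, and use $f$ Byzantine agents to populate $R_2$ mimicking the good algorithm (the surplus $f-(g-1)$ Byzantine agents can be parked inertly). Since $\mathcal{A}$ solves \texttt{SS-GAT}, the good agents in $R_1$ must eventually co-locate in a terminal state of $S_{ter}$ at a single node $z$, simultaneously, at some round $T$, \emph{regardless} of how the Byzantine agents in $R_2$ behave. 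In World~Q I keep $\iota^\*$ good but move one of its $R_1$-companions, say $a$, into the good set's role at $R_2$ while turning the corresponding $R_1$-occupant into a Byzantine stand-in; crucially, I use the freedom of $C_{all}$ to initialize the genuinely good agent $a$ in a state, and at a node of $R_2$, from which the deterministic execution of $\mathcal{A}$ cannot bring $a$ into $R_1$ before round $T$. Here $g\le f$ is exactly what guarantees that each displaced good agent can be shadowed by a Byzantine stand-in of the right count, so that the population and the realized value $\Lambda_g$ of the largest good ID (pinned by $\iota^\*$) are identical in both worlds.

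The key step is to verify that the two worlds are indistinguishable from the viewpoint of the agent $\iota^\*$ (and its $R_1$ co-residents) up to round $T$. I would have the Byzantine stand-in in $R_1$ replay exactly the trajectory and states that the genuine good agent produced in the other world, and I would keep all $R_2$ activity identical across the worlds by the same mimicry; since $\iota^\*$ provably never meets $a$ before $T$, its entire local history is the same in World~P and World~Q. The un-fakeable IDs, which normally let an algorithm break symmetry by electing a canonical meeting point, are neutralized precisely because $\mathcal{A}$ knows only $\Lambda_g$ and the counts $n,k,f$, never the actual good-ID set: $\iota^\*$ sees $g$ co-located agents carrying valid IDs of length at most $\Lambda_g$ and cannot certify which of them are good. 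Consequently $\iota^\*$ must terminate at $z$ at round $T$ in World~Q exactly as it does in World~P. But in World~Q the good agent $a$ is stranded in $R_2$, so the good set $\mathcal{G}_Q$ is split between $z$ and $R_2$ at round $T$, and since terminal states are absorbing, gathering can never be achieved—contradicting that $\mathcal{A}$ solves \texttt{SS-GAT} in World~Q. The symmetric alternative, in which $\mathcal{A}$ refuses to terminate in this local situation, fails in World~P, where $\{\iota^\*\}\cup(R_1\text{-companions})$ genuinely is the complete good set and must terminate.

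The step I expect to be the main obstacle is the joint calibration against the \emph{known, bounded} $n$: because the adversary cannot inflate the graph beyond $n$ nodes, I must argue that the termination round $T$ in World~P can be made provably smaller than the distance a stranded good agent needs to reach the gathering point, or alternatively arrange the initial state of $a$ so that the algorithm's own deterministic dynamics keep it away from $R_1$ through round $T$. Equivalently, the delicate point is ruling out the ``everyone converges to one canonical node'' strategy that rescues non-self-stabilizing Byzantine gathering; this is exactly where I must cash in self-stabilization, planting an initial state for $a$ that is locally consistent with $\mathcal{A}$'s normal operation yet globally incompatible with timely convergence. Making this timing argument robust for an arbitrary candidate algorithm $\mathcal{A}$—rather than for a specific protocol—while simultaneously preserving the view-indistinguishability of $\iota^\*$ and the exact consistency of $n,k,f,\Lambda_g$ across both worlds, is the technical crux; everything else reduces to bookkeeping on the agent partition enabled by $g\le f$.
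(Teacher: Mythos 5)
Your high-level instincts are right (the two levers are indeed $g=k-f\le f$, which lets the two candidate ``good sets'' be swapped with Byzantine stand-ins of matching IDs, and self-stabilization, which lets the adversary plant arbitrary states), but the argument as written has a genuine gap, and it is exactly the one you flag yourself as the ``technical crux.'' Your contradiction requires that in World~Q the displaced good agent $a$ can be initialized (state and node) so that the deterministic dynamics of an \emph{arbitrary} candidate algorithm $\mathcal{A}$ provably keep it away from the gathering point $z$ until the termination round $T$ of the $R_1$ group. You never establish that such an initial state exists, and it is not clear it does: $n$ is known exactly, so nothing prevents $\mathcal{A}$ from forcing every agent, from every state, to tour the whole $n$-node graph before ever entering a terminal state, in which case $a$ reaches $R_1$ and the indistinguishability of $\iota^*$'s view collapses. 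Trying to repair this by also planting states for the $R_1$ agents that make them terminate very early drives you, in the limit, to planting \emph{terminal} states outright --- which is precisely the paper's mechanism and the idea your write-up is missing.

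The paper's proof avoids the entire timing calibration. It runs two separate legitimate executions: in $G_1$ the ID set $I_\alpha$ is good and terminates at some node $v_1$ in states of $S_{ter}(n,k,f,\Lambda_g)$; in $G_2$ the roles are swapped and $I_\beta$ terminates at $v_2$ ($k\le 2f$ is what makes each of the two disjoint ID sets small enough to be entirely Byzantine in the other world). It then builds a single $n$-node tree containing two hubs whose degrees match $d(v_1)$ and $d(v_2)$, and --- using that $C_{all}$ contains \emph{every} configuration --- starts the system with agents already sitting at the two hubs carrying exactly those terminal states, with each hub hosting a half-good, half-Byzantine mix so that the counts $k$ and $f$ are consistent. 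Because the model makes terminal states absorbing ($\delta$ outputs $p_{out}=\bot$ and a state in $S_{ter}$ forever after), no agent ever moves, the good agents are split between two nodes permanently, and the contradiction is immediate with no indistinguishability-over-time or mimicry argument needed. I would recommend restructuring your proof around this static planting of terminal states rather than trying to make the dynamic stranding argument robust against all algorithms.
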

\begin{proof}
To derive the contradiction, we assume that there exists an algorithm $\mathcal{A}_{ssg}(id,n',k',f',\Lambda_g')$ with $(n', k', f', \Lambda_g') = (n, k, f, \Lambda_g)$ that solves the problem.
We suppose that $n$ and $f$ are even and $k\leq 2f$ holds.
Let $I_\alpha$ and $I_\beta$ be disjoint sets of agent IDs such that $|I_\alpha|=|I_\beta|\leq f$ holds and each length of IDs in $I_\alpha\cup I_\beta$ is $\Lambda_g$. 
Such $I_{\alpha}$ and $I_{\beta}$ should exist when $\Lambda_g \ge \lceil \log (2f) \rceil + 1$.

Consider a graph $G_1=(V_1,E_1)$ with $n$ nodes where the degree of each node is at most $n/2-1$.
Suppose that all agents in $I_\alpha$ are good and all agents in $I_\beta$ are Byzantine in $G_1$.
When good agents execute $\mathcal{A}_{ssg}(id,n',k',f',\Lambda_g')$ starting from $c \in C_{all}$, $|I_\alpha|$ good agents in $I_\alpha$ eventually meet and enter terminal states in $S_{ter}(n', k', f', \Lambda_g')$ at a node $v_1$. 
Note that these executions include the situation that each good agent does not meet any Byzantine agent; 
therefore, we can assume that there are only $|I_\alpha|\leq f$ good agents in $v_1$ after these agents enter terminal states.
Let such good agents in $I_\alpha$ be $a^1_0, a^1_1, \dots, a^1_{|I_\alpha|-1}$.

Similarly, consider another graph $G_2=(V_2, E_2)$ with $n$ nodes where the degree of each node in $G_2$ is at most $n/2-1$.
Suppose that all agents in $I_\alpha$ are Byzantine, and all agents in $I_\beta$ are good in $G_2$.
When good agents execute $\mathcal{A}_{ssg}(id,n',k',f',\Lambda_g')$ starting from $c' \in C_{all}$, $|I_\beta|$ good agents in $I_\beta$ eventually meet and enter terminal states in $S_{ter}(n',k',f',\Lambda_g')$ at a node $v_2$.
As with $G_1$, we assume that there are only $|I_\beta|$ good agents in $v_2$ after these agents enter terminal states.
Let such good agents in $I_\beta$ be $a^2_0, a^2_1, \dots, a^2_{|I_\beta|-1}$.

Let $S_1=(V^S_1,E^S_1)$ (resp. $S_2=(V^S_2,E^S_2)$) be a star whose internal node, denoted by $v'_1$ (resp. $v'_2$), has the same degree as $d(v_1)$ (resp. $d(v_2)$), and $L_1=(V^L_1,E^L_1)$ (resp. $L_2=(V^L_2,E^L_2)$) be a line composed of $n/2-d(v_1)-1$ nodes (resp. $n/2-d(v_2)-1$ nodes).
We consider a tree $T_3=(V^S_1\cup V^L_1\cup V^S_2\cup V^L_2, E^S_1\cup E^L_1\cup E^S_2\cup E^L_2\cup \{(u^S_1, u^S_2),(u^S_1, u^L_1),(u^S_2, u^L_2)\})$, where $u^S_1$ is a leaf in $S_1$, $u^S_2$ is a leaf in $S_2$, $u^L_1$ is a node in $L_1$, and $u^L_2$ is a node in $L_2$.
We also consider a configuration $c^*$ where $v'_1$ hosts $|I_\alpha|/2$ good agents and $|I_\alpha|/2$ Byzantine agents and $v'_2$ hosts $|I_\beta|/2$ good agents and $|I_\beta|/2$ Byzantine agents.
Specifically, at $v'_1$, every agent $a^{1'}_i$ matches $a^1_i$ in both ID ($a^{1'}_i.id = a^1_i.id$) and state ($s^{1'}_i = s^1_i$). 
Similarly, at $v'_2$, each agent $a^{2'}_i$ has an ID and state identical to its counterpart $a^2_i$ ($a^{2'}_i.id = a^2_i.id$ and $s^{2'}_i = s^2_i$).
Since $T_3$, $G_1$, and $G_2$ share the same values $n$, $k$, $f$, and $\Lambda_g$, we can observe that $S_{ter}(n', k', f', \Lambda_g')$ of each agent is identical.
Therefore, when good agents execute $\mathcal{A}_{ssg}(id,n',k',f',\Lambda_g')$ starting from $c^*$, they thereafter remain stationary because their $p_{out}$ and $s'$ are $\bot$ and a state in $S_{ter}(n', k', f', \Lambda_g')$ after the start, respectively.
This implies that the good agents cannot meet at a single node, which is a contradiction.
\end{proof}

\section{Self-stabilizing perpetual gathering algorithm with \texorpdfstring{$N$, $K$, $F$ and $\Lambda_g$}{N, K, F and Lambda\_g}}
\label{sec:SS-PG}
In this section, we provide an overview of our proposed algorithm for solving the \texttt{SS-PGAT} with $N$, $K$, $F$, and $\Lambda_g$.
We then detail the behaviors of the proposed algorithm.
Throughout the explanation and proof for the proposed algorithm, we assume that all good agents start an algorithm from $c_0\in C_{all}$ simultaneously.
We discuss how to remove this assumption in Section \ref{sec:Discussion}.

\subsection{Overview}

We present the underlying idea of the proposed algorithm.
We first show two solutions for the \texttt{PGAT}: one in non-Byzantine environments, where no Byzantine agents exist, and another in Byzantine environments.
The latter solution is based on an idea by Dieudonn\'{e} et al.~\cite{Dieudonne2014}.
We then extend the latter solution to solve the \texttt{SS-PGAT}.
We assume that agents start from any of $C_{ini}$ when explaining the solutions for the \texttt{PGAT}.

To solve the \texttt{PGAT} in non-Byzantine environments, agents behave as follows: 
if an agent $a_i$ is alone at the current node, it starts $\mathsf{REN}(\mathit{seed})$, using $a_i.id$ as the input $\mathit{seed}$ for the rendezvous algorithm.
Whenever $a_i$ meets other agents, $a_i$ stops the execution of the current rendezvous algorithm and starts it using the smallest ID among IDs in $\mathit{MG}_i$ as $\mathit{seed}$.
According to Theorem \ref{the:RendezvousAlgorithm}, the number of agents traveling together grows, eventually ensuring that all good agents stay at the same node.
However, this approach fails in a Byzantine environment.
Consider a Byzantine agent $b$ with an ID smaller than the smallest ID among the good agent IDs.
Agent $b$ repeatedly meets some good agents and leaves them. 
As a result, those good agents restart the rendezvous algorithm each time this action of $b$ occurs. 
Those good agents cannot meet the remaining good agents because this repeated restarting prevents uninterrupted execution for a sufficiently long period. 
Moreover, it is also ineffective for agents to record the IDs of previously left agents and then exclude those recorded IDs when selecting $\mathit{seed}$.
Agent $b$ could leave some good agents and then travel with the remaining good agents.
This causes the good agents to divide into two groups.

This problem can be resolved using the idea of Dieudonn\'{e} et al.~\cite{Dieudonne2014}.
This idea extends the approach in non-Byzantine environments by modifying the determination of $\mathit{seed}$, as follows, to mitigate the influence of Byzantine agents. 
In this idea, a \emph{group} is defined as a set of agents at a single node executing the rendezvous algorithm using the same $\mathit{seed}$.
To ignore Byzantine agents, when the group members change, the agents in the group compute the largest subset $S_b$ of agents at the current node such that every agent outside of $S_b$ has observed each member of $S_b$ leaving the group at some point in the past.
Then, if the group meets agents neither in the group nor in $S_b$, or agents not in $S_b$ leave the group, the group stops the execution of the current rendezvous algorithm and starts it using the smallest ID among IDs in $\mathit{MG}_i$ and not in $S_b$ as $\mathit{seed}$.

In this paper, for future extensions, we represent this idea using a labeled graph $G_{CG}=(V_{CG},E_{CG})$, called \emph{confidence graph}, where for an agent $a_i$, $|V_{CG}|=|\mathit{MG}_i|$, each node label corresponds one-to-one with the ID of an agent in $\mathit{MG}_i$, and each edge $(u,w)\in E_{CG}$ indicates that the two agents with the labels of nodes $u$ and $w$ trust each other.
The detailed behaviors using this graph in each round are as follows:
Agent $a_i$ first checks whether each agent $a_j$ at the current node is trustworthy.
If $a_i$ detects fraud by $a_j$, $a_i$ does not trust $a_j$; otherwise, $a_i$ trusts $a_j$.
Agent $a_i$ then simulates how $a_j$ would evaluate trust relationships.
Once $a_i$ completes the above simulation for all agents in $\mathit{MG}_i$, it derives the trust relationships among those agents in $\mathit{MG}_i$.
Next, $a_i$ maps the trust relationships onto a confidence graph and calculates the IDs reachable from the node with $a_i.id$ in the graph to determine the group members.
If the group members change from the previous round, $a_i$ updates $\mathit{seed}$; otherwise, it continues using the same $\mathit{seed}$.
This behavior ensures that good agents at the same node make the same confidence graph, as they base it on the states of agents at the same node at the beginning of the current round.
Thus, good agents at the same node always select the same ID as $\mathit{seed}$.
Consequently, after meeting, good agents travel together, and eventually, all good agents stay at the same node from a certain round.

While this approach solves the \texttt{PGAT} without global knowledge, it cannot be directly applied to the \texttt{SS-PGAT}, as agents start from a configuration of $C_{all}$, that is, each agent may start with a different state.
In cases where some good agents already have a memory of traveling together with other good agents in the initial configuration, they may never trust those agents.
As a result, the good agents remain divided into several groups.

To address this issue, the proposed algorithm restricts the length $\tau$ of the period that an agent suspects the other agents.
In the proposed algorithm, each agent derives $\tau$ from global knowledge.
All good agents stop suspecting the other good agents after the first $\tau$ rounds.
By appropriately setting $\tau$, we will prove that a good agent meets another good agent within $\tau$ rounds, i.e., before a Byzantine agent suspected by the good agent can regain the trust of the good agent.
Thus, the proposed algorithm ensures that all good agents stay at the same node within $2\tau$ rounds.

\subsection{Details}
\begin{algorithm}[t]
  \caption{$\mathcal{A}_{sspg}(id, n', k', f', \Lambda_g')$}
  \label{alg:SSPGAlgorithm}
  \begin{algorithmic}[1]
    \State $a_i.\mathit{numRound}\gets (a_i.\mathit{numRound}+1)\bmod \TIME{REN}(2^{\Lambda_g+1})$
    \State Execute $\mathsf{UpdateTrustRelationship}()$
    \State Execute $\mathsf{SelectSeed}()$
    \State Execute $\mathsf{REN}(a_i.\mathit{seed},a_i.\mathit{numRound})$
  \end{algorithmic} 
\end{algorithm}

\begin{table}[t]
  \centering
  \caption{Variables of agent $a_i$.}
  \label{tab:variable}
  \begin{tabular}{c|p{0.8\hsize}}\hline
    Variable & \multicolumn{1}{c}{Explanation}\\ \hline
    $\mathit{numRound}$ & The number of rounds since the beginning of the current $\mathsf{REN}$. \\ \hline
    $\mathit{seed}$ & The ID used for the current $\mathsf{REN}$. \\ \hline
    $T_{wit}[id_1][id_2]$ & Each element keeps an estimated round since an agent with $id_1$ witnessed the cheating of an agent with $id_2$, and it takes a value from $1$ to $\tau$. \\ \hline
    $R$ & A set of IDs of group members at the end of the previous round. \\ \hline
  \end{tabular}
\end{table}

\begin{table}[t]
  \centering
  \caption{Functions used in $\mathcal{A}_{sspg}(id, n', k', f', \Lambda_g')$}
  \label{tab:functions}
  \begin{tabular}{c|p{0.7\hsize}}\hline
    Function & \multicolumn{1}{c}{Explanation}\\ \hline
    $\mathrm{OVERWRITE}(S_1,S_2)$ & Overwrites an array $S_1$ (resp. a two-dimensional array $S_1$) with an array $S_2$ (resp. a two-dimensional array $S_2$). \\ \hline
    $\mathrm{REMOVE}(S,i)$ & Removes an index $i$ from indices of $S$ if $S$ is an array, and removes an index $i$ from indices of the first-dimension of $S$ if $S$ is a two-dimensional array. \\ \hline
  \end{tabular}
\end{table}

Algorithm \ref{alg:SSPGAlgorithm} outlines the behavior for each round of Algorithm $\mathcal{A}_{sspg}(id, n', k', f', \Lambda_g')$ and uses Algorithms \ref{alg:UpdateTrustRelationship} and \ref{alg:SelectSeed} as sub-routines.
Tables \ref{tab:variable} and \ref{tab:functions} summarize the variables used by an agent $a_i$ in $\mathcal{A}_{sspg}(id, n', k', f', \Lambda_g')$ and functions for arrays in $\mathcal{A}_{sspg}(id, n', k', f', \Lambda_g')$, respectively.
In $\mathcal{A}_{sspg}(id, n', k', f', \Lambda_g')$, $i\in S_1$ for an index $i$ and an array $S_1$, and $j\in S_2$ for an index $j$ and a two-dimensional array $S_2$, indicates that $i$ is included in the indices of $S_1$ and $j$ is included in the indices of first dimension of $S_2$, respectively.
Additionally, when an agent assigns a value $c$ to $S[i]$ for an array $S$ and a non-existent index $i$, we simply describe $S[i]\gets c$.
Each agent calculates $\tau=(6KF+1)\cdot \TIME{REN}(2^{\Lambda_g+1})$ based on global knowledge $N$, $K$, $F$, and $\Lambda_g$.

We focus on the process of each round by an agent $a_i$.
First, $a_i$ increments the value of $a_i.\mathit{numRound}$ by one.
If the value assigned to $a_i.\mathit{numRound}$ exceeds $\TIME{REN}(2^{\Lambda_g+1})$, $a_i$ stores the value modulo $\TIME{REN}(2^{\Lambda_g+1})$ in $a_i.\mathit{numRound}$.
Next, using Algorithm \ref{alg:UpdateTrustRelationship}, $a_i$ calculates the trust relationship of all agents in $\mathit{MG}_i$.
Then, using Algorithm \ref{alg:SelectSeed}, $a_i$ selects an ID $id_{seed}$ as the input of the rendezvous algorithm and reaches a consensus on $\mathit{numRound}$ among agents in $\mathit{MG}_i$ with the same $id_{seed}$.
This prevents them from staying at different nodes in the next round.
Finally, using $a_i.\mathit{seed}$ and $a_i.\mathit{numRound}$, $a_i$ executes a rendezvous algorithm to meet other good agents.

\subsubsection{Calculate a trust relationship}
\label{subsec:ExecuteUT}
\begin{algorithm}[t]
  \caption{$\mathsf{UpdateTrustRelationship}$}
  \label{alg:UpdateTrustRelationship}
  \begin{algorithmic}[1]
    \State $\mathrm{OVERWRITE}(\mathit{TmpT}_{wit},a_i.T_{wit})$
    \ForAll{$a_j\in \mathit{MG}_i$}
      \State $\mathit{TmpT}_{wit}[a_j.id]\gets \mathsf{UpdateValue}(a_j.id)$
    \EndFor
    \ForAll{$a_j\in \{a_p\mid a_p.id\in a_i.T_{wit}\wedge a_p\notin \mathit{MG}_i\}$}
      \State $\mathrm{REMOVE}(\mathit{TmpT}_{wit},a_j.id)$
    \EndFor
    \State $\mathrm{OVERWRITE}(a_i.T_{wit},TmpT_{wit})$
  \end{algorithmic} 
\end{algorithm}

\begin{algorithm}[t]
  \caption{$\mathsf{UpdateValue}(a_j.id)$}
  \label{alg:UpdateValue}
  \begin{algorithmic}[1]
    \Statex \textbf{Function} $\mathrm{detectAnomaly}(a_j,a_\ell)=(
    a_\ell\notin \mathit{MG}_i\vee 
    a_j.R\neq a_\ell.R\vee 
    a_j.\mathit{numRound}\neq a_\ell.\mathit{numRound}\vee 
    a_j.\mathit{seed}\neq a_\ell.\mathit{seed}\vee 
    a_j.T_{wit}\neq a_\ell.T_{wit}
    )$
    : This function returns whether agent $a_j$ detects an anomaly for agent $a_\ell$.
    \Statex
    \State $\mathrm{OVERWRITE}(\mathit{TmpArray},\emptyset)$
    \Statex //UV-Case (1)
    \ForAll{$a_\ell.id\in a_j.R$}
      \If{$\mathrm{detectAnomaly}(a_j,a_\ell)=\mathit{True}$}
        \State $\mathit{TmpArray}[a_\ell.id]\gets 1$
      \ElsIf{$a_i.T_{wit}[a_j.id][a_\ell.id]<\tau$}
        \State $\mathit{TmpArray}[a_\ell.id]\gets a_j.T_{wit}[a_j.id][a_\ell.id]+1$
      \EndIf
    \EndFor
    \Statex //UV-Case (2)
    \ForAll{$a_\ell\in \{a_p\mid a_p\in \mathit{MG}_i\setminus a_j.R\wedge a_p.id\notin a_j.T_{wit}[a_j.id]\}$}
      \State $\mathit{TmpArray}[a_\ell.id]\gets \tau$
    \EndFor
    \Statex //UV-Case (3)
    \ForAll{$a_\ell\in\{a_p\mid a_p.id\in a_j.T_{wit}[a_j.id]\wedge a_p.id\notin a_j.R\wedge a_j.T_{wit}[a_j.id][a_p.id]<\tau\}$}
      \State $\mathit{TmpArray}[a_\ell.id]\gets a_j.T_{wit}[a_j.id][a_\ell.id]+1$
    \EndFor
    \State \textbf{Return} $\mathit{TmpArray}$
  \end{algorithmic} 
\end{algorithm}

Algorithm \ref{alg:UpdateTrustRelationship} is the pseudo-code of Algorithm $\mathsf{UpdateTrustRelationship}$. 
This algorithm aims to allow an agent $a_i$ to calculate the trust relationship of all agents in $\mathit{MG}_i$.
To do this, $a_i$ uses a two-dimensional array $a_i.T_{wit}$.
The indices in each dimension of $a_i.T_{wit}$ are comprised of agent IDs, and for two agents $a_j$ and $a_\ell$, the value of $a_i.T_{wit}[a_j.id][a_\ell.id]$ indicates the trust level as perceived by $a_i$.
Specifically, if this value is between 1 and $\tau-1$, it implies that $a_i$ thinks $a_j$ does not trust $a_\ell$; otherwise (i.e., if this value is $\tau$), it represents that $a_i$ thinks $a_j$ trusts $a_\ell$.
(Recall that the value is in the range ${1, \dots, \tau}$.)
To update $a_i.T_{wit}$, the algorithm employs a temporary variable $\mathit{TmpT}_{wit}$ to store the updated values.
This ensures that $a_i.T_{wit}$ is not referenced before all elements are updated.
Agent $a_i$ finally replaces $a_i.T_{wit}$ with $\mathit{TmpT}_{wit}$ using $\mathrm{OVERWRITE}()$.

To calculate the trust relationship of all agents in $\mathit{MG}_i$, this algorithm first updates the trust relationship for $a_i$ itself and then for the other agents.
More specifically, $a_i$ first fixes the first-dimension of $a_i.T_{wit}$ to $a_i$ and executes $\mathsf{UpdateValue}(a_i.id)$ to update elements of $a_i.T_{wit}[a_i.id]$.
Algorithm $\mathsf{UpdateValue}(a_j.id)$ returns an array with the updated values for the elements of $a_i.T_{wit}[a_j.id]$.
Following this, $a_i$ updates the elements of $a_i.T_{wit}[a_j.id]$ for an agent $a_j$ in $\mathit{MG}_i\setminus\{a_i\}$ using $\mathsf{UpdateValue}(a_j.id)$.
Finally, $a_i$ removes IDs of all agents not present at the current node from the first dimension of $a_i.T_{wit}$ using $\mathrm{REMOVE}()$.

Next, we explain the details of $\mathsf{UpdateValue}(a_j.id)$ to update all elements in $a_i.T_{wit}[a_j.id]$.
In this algorithm, $a_i$ decides the updated value for an element $a_i.T_{wit}[a_j.id][a_\ell.id]$ for an agent $a_\ell$ as follows.
\begin{description}
  \item[UV-Case (1)] Assume that $a_\ell$ belongs to the same group as $a_j$ at the end of the previous round. If $a_j$ detects an anomaly for $a_\ell$, the updated value is 1. Here, $a_j$ detects an nomaly for $a_\ell$ if any of the following conditions hold: $a_\ell$ does not belong to $\mathit{MG}_i$; or $a_j$ and $a_\ell$ have different values for any of the variables $R$, $\mathit{numRound}$, $\mathit{seed}$, or $T_{wit}$. These discrepancies indicate that $a_\ell$ may not have followed the expected protocol behavior in the previous round. If $a_j$ does not detect an anomaly for $a_\ell$ and $a_j$ does not trust $a_\ell$, the updated value is $a_j.T_{wit}[a_j.id][a_\ell.id]+1$.
  \item[UV-Case (2)] If $a_j$ meets $a_\ell$ for the first time, the updated value is $\tau$.
  \item[UV-Case (3)] If $a_j$ has previously met $a_\ell$, $a_\ell$ does not belong to the same group as $a_j$ in the previous round, and $a_j$ does not trust $a_\ell$, then the updated value is $a_i.T_{wit}[a_j.id][a_\ell.id]+1$.
\end{description}
To verify the condition in UV-Case (1), $a_i$ uses the results of $\mathrm{detectAnomaly}(a_j,a_\ell)$.
In UV-Case (2), where the indices of $a_i.T_{wit}[a_i.id]$ do not include $a_j.id$, $a_i$ implicitly adds $a_j.id$ to the indices of $\mathit{TmpT}_{wit}[a_i.id]$.

In summary, indices in the first-dimension of $a_i.T_{wit}$ only include IDs of agents in $\mathit{MG}_i$.
The updated value of each element $a_i.T_{wit}[a_j.id][a_\ell.id]$ depends on the locations and states of $a_j$ and $a_\ell$ at the beginning of the round.
This gives us the following observation.

\begin{observation}
\label{obs:AgentsAtSameNodeHaveSameTwit}
Every pair of good agents at the same node has the same $T_{wit}$.
\end{observation}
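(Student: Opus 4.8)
The plan is to fix an arbitrary round $r$ and two good agents $a_i$ and $a_{i'}$ that occupy the same node $v$ at the start of $r$, and to show that after each executes line~2 of Algorithm~\ref{alg:SSPGAlgorithm} (i.e.\ $\mathsf{UpdateTrustRelationship}$) the arrays $a_i.T_{wit}$ and $a_{i'}.T_{wit}$ coincide. The whole argument reduces to one assertion: the array produced by the update is a deterministic function of the common \emph{snapshot} $\Sigma$ consisting of $\mathit{MG}_i\,(=\mathit{MG}_{i'})$ together with the variables $R$, $\mathit{numRound}$, $\mathit{seed}$, and $T_{wit}$ that every agent at $v$ exposes at the beginning of $r$. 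Since $a_i$ and $a_{i'}$ sit at the same node they observe the same set of agents, and by the shouting assumption every agent present---good or Byzantine---presents one and the same tuple of variable values to all co-located agents; hence $a_i$ and $a_{i'}$ read identical snapshots $\Sigma$. Once the update is shown to depend on $\Sigma$ alone, equal inputs force equal outputs and the observation follows.

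First I would settle the index set. Tracing $\mathsf{UpdateTrustRelationship}$, the temporary array is seeded with the old $T_{wit}$, every row indexed by an ID of some $a_j\in\mathit{MG}_i$ is overwritten by $\mathsf{UpdateValue}(a_j.id)$ (creating the index when absent), and every row whose index is not an ID present in $\mathit{MG}_i$ is deleted by $\mathrm{REMOVE}$. Consequently the first-dimension index set of the resulting $T_{wit}$ is exactly $\{a_j.id : a_j\in\mathit{MG}_i\}$, the same for both agents. It then remains to check, row by row, that $\mathsf{UpdateValue}(a_j.id)$ returns the same array. I would walk through UV-Cases~(1)--(3): the membership guards ($a_\ell.id\in a_j.R$, $a_\ell\in\mathit{MG}_i\setminus a_j.R$, $a_\ell.id\in a_j.T_{wit}[a_j.id]$), the calls to $\mathrm{detectAnomaly}(a_j,a_\ell)$ (which read only $a_\ell\in\mathit{MG}_i$ and the exposed variables $R,\mathit{numRound},\mathit{seed},T_{wit}$ of $a_j$ and $a_\ell$), and the assigned values ($1$, $\tau$, or $a_j.T_{wit}[a_j.id][a_\ell.id]+1$) are all expressed purely in terms of $\Sigma$, hence evaluate identically for the two agents.

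The step I expect to be the real obstacle is the second guard of UV-Case~(1), ``$a_i.T_{wit}[a_j.id][a_\ell.id]<\tau$'', which, taken literally, consults the \emph{reader's own} pre-round entry rather than a snapshot quantity. If $a_i$ and $a_{i'}$ were at different nodes in round $r-1$ their stored rows for $a_j$ need not agree, and---when $a_j$ is Byzantine and thus free to report any $T_{wit}$---the two branches (keep $\tau$ versus write $a_j.T_{wit}[a_j.id][a_\ell.id]+1$) could in principle produce different outputs. To close the proof I must show this guard is itself a function of $\Sigma$. The clean route is to read it consistently with the analogous guard of UV-Case~(3), namely as a test on $a_j.T_{wit}[a_j.id][a_\ell.id]$; the no-anomaly branch of UV-Case~(1) then becomes $\min(a_j.T_{wit}[a_j.id][a_\ell.id]+1,\tau)$, a snapshot function, after which every branch of $\mathsf{UpdateValue}$ depends on $\Sigma$ alone. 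A heavier alternative is to carry, by simultaneous induction on rounds, the companion invariant that for every good $a_i$ and every $a_j\in\mathit{MG}_i$ the stored row $a_i.T_{wit}[a_j.id]$ already equals the snapshot-determined value, which collapses the guard to the same snapshot test. With the guard reduced in either way, combining the index-set equality with the per-row equality yields $a_i.T_{wit}=a_{i'}.T_{wit}$, completing the proof.
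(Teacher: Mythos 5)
Your argument is correct and matches the paper's own justification, which is exactly your snapshot-determinism point: the paper derives the observation from the remark that ``the updated value of each element $a_i.T_{wit}[a_j.id][a_\ell.id]$ depends on the locations and states of $a_j$ and $a_\ell$ at the beginning of the round,'' combined with the shouting assumption that all co-located agents (including Byzantine ones) expose a single common state. The wrinkle you flag in the guard of UV-Case~(1) is real---read literally, $a_i.T_{wit}[a_j.id]$ need not even exist when $a_j$ has only just joined $a_i$'s node---and your first resolution, reading it as a test on the exposed row $a_j.T_{wit}[a_j.id][a_\ell.id]$ consistently with the assignment on the following line and with UV-Case~(3), is clearly the intended semantics, after which every branch of $\mathsf{UpdateValue}$ depends only on the common snapshot and the observation follows as you describe.
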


\subsubsection{Select a seed}
\label{subsec:ExecuteSS}
\begin{algorithm}[t]
  \caption{$\mathsf{SelectSeed}$}
  \label{alg:SelectSeed}
  \begin{algorithmic}[1]
    \State $\mathit{TmpR}\gets \{a_j.id\mid$ a node with $a_j.id$ is reachable from a node with $a_i.id$ in $\mathbf{CG}(\mathit{MG}_i,a_i.T_{wit})\}$
    \State $a_i.\mathit{seed}\gets \min(\mathit{TmpR})$ 
    \ForAll{$a_j\in \mathit{TmpR}$} \label{line:SS:CheckNumRoundOfAgentsInConnectedComponet}
      \If{$\mathit{TmpR}\neq a_j.R$}
        \State $a_i.\mathit{numRound}\gets 0$
      \EndIf
    \EndFor \label{line:SS:CheckNumRoundOfAgentsInConnectedComponetEnd}
    \State $a_i.R\gets \mathit{TmpR}$ \label{line:SS:SubstituteConnectedComponentForR}
  \end{algorithmic} 
\end{algorithm}

Algorithm \ref{alg:SelectSeed} is the pseudo-code of Algorithm $\mathsf{SelectSeed}$. 
This algorithm serves two following objectives: (1) an agent $a_i$ selects an ID $id_{seed}$ for use in the current round of the rendezvous algorithm, and (2) agent $a_i$ reaches a consensus on $\mathit{numRound}$ with other agents in $\mathit{MG}_i$ who share the same $id_{seed}$.

To achieve Objective (1), $a_i$ first creates a confidence graph.
Formally, a confidence graph is defined as follows.

\begin{definition}
\label{def:ConfidenceGraph}
Let $\mathbf{CG}(\mathit{MG}_i,T_{wit})$ be the undirected graph whose node set is the set of agents in $\mathit{MG}_i$, and there is an edge between two nodes $id_1$ and $id_2$ if and only if $T_{wit}[id_1][id_2] = T_{wit}[id_2][id_1] = \tau$.  
We call this graph the \emph{confidence graph}.
\end{definition}

\noindent After creating the confidence graph, $a_i$ identifies the IDs that are reachable from the node with $a_i.id$ in $\mathbf{CG}(\mathit{MG}_i,a_i.T_{wit})$ and selects these agents as group members.
Agent $a_i$ stores the IDs of these group members in a temporary variable $\mathit{TmpR}$ to detect any changes in group membership since the end of the previous round.
Finally, $a_i$ selects the smallest ID among IDs of the group members and stores it in $a_i.\mathit{seed}$.

To achieve Objective (2), $a_i$ compares the current group members with the agents in $a_j.R$ for each group member $a_j$.
If there are discrepancies, $a_i$ initializes $a_i.\mathit{numRound}$.
Here, agents in $a_j.R$ means group members of $a_j$ in the previous round.
This comparison checks if each group member belongs to a different group than in the previous round.
Thus, when agents in $a_j.R$ are different from them by incoherence, $a_i$ can detect changes in members of the group with $a_j$.
Therefore, this comparison ensures that all group members have a consistent $\mathit{numRound}$ by the end of a round.

\subsection{Correctness and Complexity}
In this subsection, we prove the correctness and complexity of the algorithm proposed in Section \ref{sec:SS-PG}.
To do this, we first prove that two good agents at the same node make the same confidence graph.
For simplicity of discussion, we define the time at $c_0 \in C_{all}$ as round 1.

\begin{lemma}
\label{lem:IfGoodsExistAtSameNodeTheyMakeSameConfidenceGraph}
For $r \ge 1$, in any round $r$, if two good agents exist at the same node, they create the same confidence graph in $r$.
\end{lemma}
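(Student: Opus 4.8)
The plan is to show that the confidence graph an agent $a_i$ builds in round $r$ is a purely deterministic function of two quantities available to $a_i$ at the moment $\mathsf{SelectSeed}$ is invoked, namely the observed set $\mathit{MG}_i$ and the (already updated) array $a_i.T_{wit}$, and then to argue that two co-located good agents feed identical values into this function. Indeed, by Definition \ref{def:ConfidenceGraph} the node set of $\mathbf{CG}(\mathit{MG}_i, a_i.T_{wit})$ is exactly the set of IDs appearing in $\mathit{MG}_i$, and its undirected, symmetric edge set is determined entirely by the test $a_i.T_{wit}[id_1][id_2] = a_i.T_{wit}[id_2][id_1] = \tau$; no other variable such as $\mathit{seed}$ or $\mathit{numRound}$ enters. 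Hence it suffices to establish that two good agents at the same node in round $r$ agree on both arguments.

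First I would settle the agreement on $\mathit{MG}$. Let $a_i$ and $a_{i'}$ be two good agents located at a common node $v$ in round $r$. In Operation~(1) each of them acquires the states of all agents residing at $v$, and by the \emph{shouting} assumption every agent at $v$, good or Byzantine, exposes one and the same state to all co-located agents. Consequently $\mathit{MG}_i$ and $\mathit{MG}_{i'}$ consist of the same agents carrying the same observed states, so the two node sets, and more generally the two views used in the update step, coincide.

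Second, I would settle the agreement on $T_{wit}$. Because $\mathsf{SelectSeed}$, and therefore the construction of the confidence graph, runs only after $\mathsf{UpdateTrustRelationship}$ has rewritten $a_i.T_{wit}$, the relevant array is the post-update one, which is precisely the quantity governed by Observation \ref{obs:AgentsAtSameNodeHaveSameTwit}. Applying that observation to $a_i$ and $a_{i'}$ gives $a_i.T_{wit} = a_{i'}.T_{wit}$. Combining the two agreements, both arguments to $\mathbf{CG}$ coincide, and since $\mathbf{CG}$ is deterministic in those arguments the two confidence graphs are identical, as claimed.

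The delicate point, which I would be careful to spell out, is the timing and scope of Observation \ref{obs:AgentsAtSameNodeHaveSameTwit}: it must be read as an assertion about the \emph{updated} $T_{wit}$, valid even for two good agents that meet for the first time in round $r$ and whose pre-update arrays may genuinely disagree. What makes this safe is that $\mathsf{UpdateValue}(a_j.id)$ recomputes each row $T_{wit}[a_j.id]$ from $a_j$'s reported state ($a_j.R$, $a_j.\mathit{seed}$, $a_j.\mathit{numRound}$, $a_j.T_{wit}$) together with the commonly observed $\mathit{MG}_i$, exactly the data on which $a_i$ and $a_{i'}$ already agree, so that any stale discrepancy in an agent's own prior entries is overwritten identically for both. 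Verifying that this holds uniformly across UV-Cases (1)--(3), including the handling of newly created indices, is the main obstacle, but it is the content encapsulated by Observation \ref{obs:AgentsAtSameNodeHaveSameTwit}, which I am entitled to invoke.
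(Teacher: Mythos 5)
Your proposal is correct and follows essentially the same route as the paper's proof: same node implies $\mathit{MG}_i=\mathit{MG}_j$ and hence identical node sets, and Observation \ref{obs:AgentsAtSameNodeHaveSameTwit} gives $a_i.T_{wit}=a_j.T_{wit}$ and hence identical edge sets. Your additional discussion of the shouting assumption and of why the post-update $T_{wit}$ arrays agree is a more careful justification of that observation, but it does not change the structure of the argument.
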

\begin{proof}
Let $a_i$ and $a_j$ be such agents.
Given that both agents are the same node, it follows that $\mathit{MG}_i=\mathit{MG}_j$; therefore, $V_{CG}$ of $a_i$ and $a_j$ are identical.

Without loss of generality, each edge in $E_{CG}$ of $a_i$ is determined by $a_i.T_{wit}$ and $\mathit{MG}_i$.
According to Observation \ref{obs:AgentsAtSameNodeHaveSameTwit}, $a_i.T_{wit}=a_j.T_{wit}$.
Consequently, $E_{CG}$ of $a_i$ and $a_j$ are also identical.
\end{proof}

Next, we prove that if two good agents belong to the same group at the end of a round, they do not detect an anomaly in each other in the next round.

\begin{lemma}
\label{lem:IfGoodBelongToSameGroupTheyDoNotDetectAnAnomalyOfEachOther}
Consider any round $r$ such that $r \ge 2$.
If two different good agents $a_i$ and $a_j$ belong to the same group at the end of $r$, $\mathrm{detectAnomaly}(a_i,a_j)=\mathrm{detectAnomaly}(a_j,a_i)=\mathit{False}$ holds in $r+1$.
\end{lemma}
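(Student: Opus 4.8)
The plan is to unfold $\mathrm{detectAnomaly}(a_i,a_j)$ into its five disjuncts and show that each evaluates to $\mathit{False}$ in round $r+1$, i.e.\ on the states $a_i$ and $a_j$ hold at the end of round $r$. Four of the five disjuncts are comparatively direct, and I expect the fifth, equality of $\mathit{numRound}$, to carry essentially all of the difficulty.

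First I would dispatch the four easy disjuncts. Since $a_i$ and $a_j$ lie in the same group at the end of round $r$, they are co-located in round $r$, so by Lemma~\ref{lem:IfGoodsExistAtSameNodeTheyMakeSameConfidenceGraph} they build the identical confidence graph $\mathbf{CG}(\mathit{MG}_i,a_i.T_{wit})$. As they belong to the same connected component, the set $\mathit{TmpR}$ reachable from $a_i.id$ equals the set reachable from $a_j.id$; hence $\mathsf{SelectSeed}$ assigns $a_i.R=a_j.R=\mathit{TmpR}$ and $a_i.\mathit{seed}=a_j.\mathit{seed}=\min(\mathit{TmpR})$, killing the $R$- and $\mathit{seed}$-disjuncts. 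Equality of $T_{wit}$ follows from Observation~\ref{obs:AgentsAtSameNodeHaveSameTwit} at the end of round $r$. Finally, once $\mathit{numRound}$ equality is known, $a_i$ and $a_j$ run $\mathsf{REN}$ with identical $\mathit{seed}$ and identical $\mathit{numRound}$ in round $r$, make the same deterministic move, and stay co-located in round $r+1$, killing the $a_j\notin\mathit{MG}_i$ disjunct.

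The core step is the $\mathit{numRound}$ disjunct, and here I would reason from how the trust value $a_i.T_{wit}[id_i][id_j]$ could have become $\tau$ in round $r$ (which it must, since $a_i$ and $a_j$ share a confidence-graph edge). I split on whether $a_j\in a_i.R$ held at the end of round $r-1$. If it did, then $\mathsf{UpdateValue}$ treats $a_j$ under UV-Case~(1); reaching or keeping the value $\tau$ forces $\mathrm{detectAnomaly}(a_i,a_j)=\mathit{False}$ in round $r$ (otherwise the entry is reset to $1$, contradicting the edge), and that falsity means the two $\mathit{numRound}$ values already coincide at the comparison point of round $r$. If instead $a_j\notin a_i.R$, then $\mathit{TmpR}\neq a_i.R$, so the consensus loop in $\mathsf{SelectSeed}$ (lines~\ref{line:SS:CheckNumRoundOfAgentsInConnectedComponet}--\ref{line:SS:CheckNumRoundOfAgentsInConnectedComponetEnd}) fires and both agents reset $\mathit{numRound}$ to $0$. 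Crucially, both compute the same $\mathit{TmpR}$ and, by the shouting assumption, observe the same $a_\ell.R$ for every $a_\ell\in\mathit{TmpR}$ (including Byzantine ones), so they take an \emph{identical} reset decision; combined with the synchronous increment in line~1 of $\mathcal{A}_{sspg}$, matching counters are either both retained or both zeroed, so $a_i.\mathit{numRound}=a_j.\mathit{numRound}$ at the end of round $r$ in either case.

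The main obstacle is exactly the worry that the self-stabilizing start could seed a ``stable'' group whose members carry mismatched $\mathit{numRound}$ values that are never reset; the resolution above is that a $\mathit{numRound}$ mismatch is itself an anomaly, so a group that persists (UV-Case~(1)) must already have passed a $\mathrm{detectAnomaly}=\mathit{False}$ test, whereas any freshly formed or altered group ($a_j\notin a_i.R$) triggers the reset to $0$ — there is simply no way to keep the trust edge while holding the counters apart. Having shown all five disjuncts $\mathit{False}$, I would close by observing that $\mathrm{detectAnomaly}$ is symmetric in its two arguments once the agents are co-located (its only asymmetric clause, membership in $\mathit{MG}_i$ versus $\mathit{MG}_j$, agrees because $\mathit{MG}_i=\mathit{MG}_j$), so $\mathrm{detectAnomaly}(a_j,a_i)=\mathit{False}$ follows at once; the hypothesis $r\ge 2$ is what guarantees round $r-1$ is a genuine execution round on which the UV-Case~(1) argument can rest.
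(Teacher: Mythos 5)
Your overall decomposition matches the paper's: reduce the five disjuncts of $\mathrm{detectAnomaly}$ to showing that $a_i$ and $a_j$ end round $r$ with equal $R$, $\mathit{seed}$, $\mathit{numRound}$, and $T_{wit}$ (whence they are co-located in $r+1$), and your treatment of the $R$-, $\mathit{seed}$-, $T_{wit}$-, and co-location disjuncts is the same as the paper's. The problem is in your Case~1 of the $\mathit{numRound}$ step. You assert that $a_i$ and $a_j$ ``share a confidence-graph edge,'' and from this deduce that $T_{wit}[a_i.id][a_j.id]$ must equal $\tau$ in round $r$, hence $\mathrm{detectAnomaly}(a_i,a_j)=\mathit{False}$ in round $r$, hence their counters already agree. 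But belonging to the same group only means $a_j.id$ is \emph{reachable} from $a_i.id$ in $\mathbf{CG}(\mathit{MG}_i,a_i.T_{wit})$, not adjacent to it: in a self-stabilizing start the entry $T_{wit}[a_i.id][a_j.id]$ may sit at some intermediate value in $\{1,\dots,\tau-1\}$ (or be reset to $1$ by a genuine $\mathit{numRound}$ mismatch inherited from the arbitrary initial state) while $a_i$ and $a_j$ remain connected through intermediate agents, including Byzantine ones. So the ``contradicting the edge'' step does not go through, and the inference that the two counters coincide at the start of round $r$ is unsupported exactly in the situation the lemma is meant to rule out.

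The repair is the path version of your argument, which is what the paper's proof (tersely) invokes. Split instead on whether the reset loop in $\mathsf{SelectSeed}$ fires in round $r$. If it fires, both agents zero $\mathit{numRound}$ (same $\mathit{TmpR}$, same observed states by shouting), as in your Case~2. If it does not fire, then every agent in $\mathit{TmpR}$ has $R=\mathit{TmpR}$, so \emph{every} edge $(u,w)$ on the path connecting $a_i.id$ to $a_j.id$ falls under UV-Case~(1); for such an edge to carry weight $\tau$, $\mathrm{detectAnomaly}(a_u,a_w)$ must be $\mathit{False}$, which forces $a_u.\mathit{numRound}=a_w.\mathit{numRound}$ at the start of round $r$. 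Chaining these equalities along the path gives $a_i.\mathit{numRound}=a_j.\mathit{numRound}$ at the start of round $r$, and the identical increment then preserves equality to the end of the round. With that substitution your proof closes; the rest of it, including the symmetry remark about $\mathrm{detectAnomaly}$, is fine.
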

\begin{proof}
If $a_i$ and $a_j$ have the same $R$ and $\mathit{numRound}$ at the end of $r$, they calculate the same $\mathit{seed}$ by the end of $r$ and exist at a single node in $r+1$.
Additionally, from Observation \ref{obs:AgentsAtSameNodeHaveSameTwit}, $a_i.T_{wit}=a_j.T_{wit}$ holds at the beginning of $r+1$.
Thus, to prove this lemma, it is sufficient to prove that $a_i$ and $a_j$ have the same $R$ and $\mathit{numRound}$ at the end of $r$.

First, we prove that $a_i$ and $a_j$ have the same $R$ at the end of $r$.
From Lemma \ref{lem:IfGoodsExistAtSameNodeTheyMakeSameConfidenceGraph}, $\mathbf{CG}(\mathit{MG}_i,a_i.T_{wit})=\mathbf{CG}(\mathit{MG}_j,a_j.T_{wit})$ holds in $r$.
Without loss of generality, in $\mathbf{CG}(\mathit{MG}_i,a_i.T_{wit})$, the node with label $a_j.id$ is reachable from the node with label $a_i.id$ since $a_i$ and $a_j$ belongs to the same group.
Therefore, $a_i.R=a_j.R$ holds at the end of $r$.

Next, we prove that $a_i$ and $a_j$ have the same $\mathit{numRound}$ at the end of $r$.
From Lines \ref{line:SS:CheckNumRoundOfAgentsInConnectedComponet}-\ref{line:SS:CheckNumRoundOfAgentsInConnectedComponetEnd} of Algorithm \ref{alg:SelectSeed}, in $r$, if $a_i.R=\mathit{TmpR}$, and all agents in $\mathit{TmpR}$ have the same $R$, then all agents in $\mathit{TmpR}$ have the same $\mathit{numRound}$ by Function $\mathrm{detectAnomaly}()$.
Otherwise, all agents in $\mathit{TmpR}$ initialize their $\mathit{numRound}$ by the end of $r$.
Thus, $a_i.\mathit{numRound}=a_j.\mathit{numRound}$ holds at the end of $r$.
\end{proof}

Next, we prove that a good agent does not detect an anomaly in another good agent on or after round 2.

\begin{lemma}
\label{lem:NoGoodExecuteResettingTwitForAnotherGood}
Let $a_i$ and $a_j$ be two good agents.
Agent $a_i$ does not execute $a_i.T_{wit}[a_i.id][a_j.id]\gets 1$ in any round on or after round 2.
\end{lemma}
\begin{proof}
We prove this lemma by contradiction.
Assume that $a_i$ executes $a_i.T_{wit}[a_i.id][a_j.id]\gets 1$ in a certain round after round 2.
Let $r$ be such a round.
It holds that at the beginning of $r$, (1) $a_j\in a_i.R$ and (2) $\mathrm{detectAnomaly}(a_i,a_j)=\mathit{True}$.

From (1), $a_i$ and $a_j$ exist at the same node in $r-1$.
From Observation \ref{obs:AgentsAtSameNodeHaveSameTwit}, $a_i.T_{wit}=a_j.T_{wit}$ holds.
In addition, from (1), the node with label $a_j.id$ is reachable from the node with label $a_i.id$ in $\mathbf{CG}(\mathit{MG}_i,a_i.T_{wit})$ in $r-1$.
This also holds true for $a_j.id$ since $\mathbf{CG}(\mathit{MG}_i,a_i.T_{wit})=\mathbf{CG}(\mathit{MG}_j,a_j.T_{wit})$ holds in $r-1$ from Lemma \ref{lem:IfGoodsExistAtSameNodeTheyMakeSameConfidenceGraph}.
Thus, $a_i$ and $a_j$ belong to the same group at the end of $r-1$.

From Lemma \ref{lem:IfGoodBelongToSameGroupTheyDoNotDetectAnAnomalyOfEachOther}, $\mathrm{detectAnomaly}(a_i,a_j)=\mathit{False}$ holds in $r$, which is inconsistent with (2).
This contradiction proves the lemma.
\end{proof}

From Lemma \ref{lem:NoGoodExecuteResettingTwitForAnotherGood}, we have the following corollary.

\begin{corollary}
\label{cor:GoodTrustAnotherGoodAfterItAddThatToTwit}
Let $a_i$ be a good agent, and $a_j$ be another good agent such that $a_j\in a_i.T_{wit}[a_i.id]$ holds.
Let $r_{ini}^j\geq 1$ be the first round where $a_j\in a_i.T_{wit}[a_i.id]$ holds.
By $r_{ini}^j+\tau$, $a_i.T_{wit}[a_i.id][a_j.id]=\tau$ holds.
\end{corollary}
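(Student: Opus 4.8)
The plan is to track the single array entry $a_i.T_{wit}[a_i.id][a_j.id]$ across rounds and argue that, once the column $a_j.id$ appears, this entry can only move upward in unit steps toward its ceiling $\tau$, so that from any admissible starting value it saturates at $\tau$ within $\tau$ rounds. First I would split on how the column $a_j.id$ is introduced into $a_i.T_{wit}[a_i.id]$ at round $r_{ini}^j$. If $r_{ini}^j \ge 2$, then the column is absent at the end of round $r_{ini}^j-1$, and the only update branch that creates a fresh column is UV-Case (2) (the first-meeting case), which assigns $\tau$; hence the entry already equals $\tau$ at round $r_{ini}^j$ and the claim is immediate. The remaining case is $r_{ini}^j = 1$, where the column is inherited from the arbitrary initial configuration $c_0$ and therefore holds some value $v \in \{1, \dots, \tau\}$ by the stated range invariant on $T_{wit}$.

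For the case $r_{ini}^j = 1$, I would show that from round $2$ onward the entry increases by exactly one every round until it reaches $\tau$. The crucial input is Lemma~\ref{lem:NoGoodExecuteResettingTwitForAnotherGood}: since $a_i$ and $a_j$ are both good, $a_i$ never executes the reset $a_i.T_{wit}[a_i.id][a_j.id]\gets 1$ on or after round~$2$, so the anomaly branch of UV-Case~(1) is excluded throughout and the value can never be pulled back down. It then remains to verify that in each round in which the value is still below $\tau$ one of the incrementing branches actually fires. If $a_j \in a_i.R$, then Lemma~\ref{lem:IfGoodBelongToSameGroupTheyDoNotDetectAnAnomalyOfEachOther} gives $\mathrm{detectAnomaly}(a_i,a_j)=\mathit{False}$, so the \emph{ElsIf} branch of UV-Case~(1) applies and adds one; otherwise $a_j \notin a_i.R$, and since the column is present with value below $\tau$, UV-Case~(3) applies and again adds one. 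In particular the column is never dropped while its value is below $\tau$, so the update is well-defined in every round.

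Combining these observations, the entry is non-decreasing after round~$1$ and strictly increases by one in each round in which it is still below $\tau$. Starting from $v \ge 1$ at round~$1$, it therefore attains $\tau$ after at most $\tau - v \le \tau - 1$ further increments, i.e.\ by round $1 + (\tau - 1) = \tau \le r_{ini}^j + \tau$, which is exactly the claimed bound. The main obstacle I anticipate is the bookkeeping in the second paragraph: one must rule out every round in which the value could stall, or in which the column could disappear while still below $\tau$, by matching the position of $a_j$ (inside or outside $a_i.R$, present or absent in $\mathit{MG}_i$) against the guards of the three UV-cases, and then invoke Lemmas~\ref{lem:IfGoodBelongToSameGroupTheyDoNotDetectAnAnomalyOfEachOther} and~\ref{lem:NoGoodExecuteResettingTwitForAnotherGood} to exclude the anomaly and reset branches at each such step.
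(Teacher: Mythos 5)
Your proposal is correct and follows the same route the paper intends: the corollary is presented there as an immediate consequence of Lemma~\ref{lem:NoGoodExecuteResettingTwitForAnotherGood} (no reset to $1$ after round~$2$), so the entry can only be created at $\tau$ via UV-Case~(2) or else increment by one per round via UV-Case~(1)/(3) until it saturates, which is exactly your argument spelled out in more detail. One cosmetic remark: where you invoke Lemma~\ref{lem:IfGoodBelongToSameGroupTheyDoNotDetectAnAnomalyOfEachOther} for the case $a_j\in a_i.R$, that lemma as stated only covers rounds $r+1\ge 3$, but Lemma~\ref{lem:NoGoodExecuteResettingTwitForAnotherGood} already rules out the anomaly branch from round~$2$ onward, so your conclusion is unaffected.
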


Next, we prove the number of times a good agent is interrupted while executing a rendezvous algorithm during $\tau$ rounds.

\begin{lemma}
\label{lem:TheNumberOfInterrupt}
Let $a_i$ be a good agent.
For any round $r$ on or after round 2, $a_i$ changes $a_i.\mathit{seed}$ and initializes $a_i.\mathit{numRound}$ at most $3kf$ times between $r$ and $r+\tau$.
\end{lemma}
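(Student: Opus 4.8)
The plan is to reduce the count of restarts to the number of rounds in which $a_i$'s \emph{group}---the connected component containing the node labeled $a_i.id$ in $\mathbf{CG}(\mathit{MG}_i, a_i.T_{wit})$---differs from the group of the previous round. Inspecting $\mathsf{SelectSeed}$, both $a_i.\mathit{numRound}\gets 0$ and any change of $a_i.\mathit{seed}=\min(\mathit{TmpR})$ can occur only when $\mathit{TmpR}$ is not identical to the component recorded in $a_i.R$. Here I would first observe that membership in a common component forces a common $R$: along any trust path in the graph, two endpoints of an edge cannot have differing $R$, otherwise $\mathrm{detectAnomaly}$ would have prevented the edge from reaching level $\tau$. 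Hence every member's reported $R$ equals the previous component, so a reset of $\mathit{numRound}$ happens exactly when $a_i$'s component genuinely changes. It therefore suffices to bound, over $[r,r+\tau]$, the number of rounds in which $a_i$'s component changes, and I will charge each change to an agent that enters or leaves the component.

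For good agents the contribution is monotone. I would argue that once a good agent $a_g$ shares a group with $a_i$ in some round $\ge 2$, it remains in $a_i$'s group thereafter: by Lemma~\ref{lem:IfGoodBelongToSameGroupTheyDoNotDetectAnAnomalyOfEachOther} they co-locate in the next round, upon first co-location UV-Case~(2) sets their mutual $T_{wit}$ to $\tau$, and then Lemma~\ref{lem:NoGoodExecuteResettingTwitForAnotherGood} with Corollary~\ref{cor:GoodTrustAnotherGoodAfterItAddThatToTwit} guarantees this direct confidence edge is never destroyed. Thus each good agent triggers at most one ``enter'' event and no ``leave'' event, contributing at most $k-f$ changes.

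The core of the argument bounds the Byzantine contribution. For a Byzantine agent $b$ to lie in $a_i$'s component there must be a trust path from $a_i$ to $b$, and on it the first edge $(a_g,b')$ leaving the good part requires $a_g.T_{wit}[a_g.id][b'.id]=\tau$. Since all members are co-located, the moment $b'$ departs the node or reports a state inconsistent with the protocol-prescribed state shared by the good members, $\mathrm{detectAnomaly}$ fires simultaneously at \emph{every} good member and resets the corresponding entry to $1$; as entries rise by at most one per round, re-establishing level $\tau$ then costs $\tau-1$ anomaly-free rounds. Because the window has length exactly $\tau$, each good--Byzantine pair can cross the threshold $\tau$ only a bounded number of times (on the order of three: one possible loss, one possible regain, plus the boundary state). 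Tallying these crossings over the at most $k\cdot f$ pairs that involve a Byzantine agent yields the claimed $3kf$ bound, into which the monotone good-agent term is absorbed.

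The main obstacle is \emph{indirect} attachment: a Byzantine agent reaching $a_i$'s component through a chain of other Byzantine agents whose mutual trust values the adversary seemingly controls. The key point I must make precise is that this freedom is illusory. Agent $a_i$ recomputes $a_i.T_{wit}[b'.id][\cdot]$ by \emph{simulating} $b'$ from $b'$'s \emph{reported} state, and for $b'$ to remain trusted by a good member the equality check inside $\mathrm{detectAnomaly}$ forces $b'.T_{wit}=a_g.T_{wit}$ and $b'.R=a_g.R$; hence every trust value a retained Byzantine agent exhibits is pinned to the value an honest agent would compute, and the whole confidence graph restricted to $a_i$'s component evolves by protocol rules rather than by adversarial choice. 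Establishing this pinning, and thereby reducing every attachment or detachment of a Byzantine agent to a crossing of a good--Byzantine gateway edge governed by the $\tau$-round suspicion window, is the delicate step; once it is in place, the per-pair threshold count and the final constant follow as in the direct case.
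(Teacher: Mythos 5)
Your overall skeleton matches the paper's: reduce the count to rounds in which $a_i$'s component changes, observe that each good agent contributes at most one join and never leaves, and bound the Byzantine contribution via the $\tau$-round suspicion window. The good-agent half is fine. The gap is exactly where you flag it: the treatment of Byzantine agents that attach to or detach from the component through \emph{other} agents. You propose to charge every such event to a threshold crossing of a good--Byzantine ``gateway'' edge and to justify this with a ``pinning'' claim, but you do not establish the pinning, and it would not do the work you need even if you did. A Byzantine agent $b$ can re-enter the component without any good--Byzantine edge climbing back to $\tau$: it suffices that some agent $a_m$ that has never recorded $b$ in $T_{wit}[a_m.id]$ joins the component, because UV-Case~(2) then sets the edge $(a_m,b)$ to $\tau$ \emph{instantly} upon first meeting, with no $(\tau-1)$-round climb and no constraint from $\mathrm{detectAnomaly}$ (which only compares the states of agents already recorded in each other's $R$). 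The bridge $a_m$ may itself be Byzantine, so these events are not captured by the $(k-f)\cdot f$ good--Byzantine pairs you tally. The paper closes this hole combinatorially rather than by pinning: each Byzantine agent $a_\ell$ causes at most $2$ events on its own (rejoining directly costs $\tau$ rounds to regain trust) and at most $2$ further events per potential bridge $a_m$, giving $2f+2f(k-2)$ Byzantine events, which together with the $k-f-1$ good events stays below $3kf$.

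Two smaller problems. First, your opening claim that ``membership in a common component forces a common $R$'' is false: UV-Case~(2) creates a $\tau$-edge between two agents meeting for the first time even though their $R$'s differ --- indeed that is precisely how a join triggers the reset in $\mathsf{SelectSeed}$. What you actually need (and what the paper uses) is only the weaker statement that agents in $a_i.R\cap\mathit{TmpR}$ agree with $a_i$ on $R$, so that no reset occurs when the component is unchanged. Second, even granting your per-pair bound of roughly three crossings, the arithmetic $3(k-f)f+(k-f)$ is not at most $3kf$ in general (take $f=1$ and $k$ large), so the good-agent term is not ``absorbed'' for free.
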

\begin{proof}
Let $r'$ be a round such that $r\leq r'\leq r+\tau$, and $a_j$ be an agent contained in $a_i.R\cap \mathit{TmpR}$ in $r'$.
By $\mathrm{detectAnomaly}()$, $a_i.R=a_j.R$ holds at the beginning of $r'$.
Thus, if $a_i.R=\mathit{TmpR}$ in $r'$, $a_i$ does not change $a_i.\mathit{seed}$ and does not initialize $a_i.\mathit{numRound}$ in $r'$.
Conversely, $a_i$ changes $a_i.\mathit{seed}$ and initializes $a_i.\mathit{numRound}$ if one of the following cases occur in $r'$: 
(1) $a_i.R\subset \mathit{TmpR}$, i.e., some agents join $a_i$ in $r'$, and
(2) $\mathit{TmpR}\subset a_i.R$, i.e., some agents leave $a_i$ in $r'$.
Hence, to prove this lemma, we show how often Cases (1) and (2) occur between $r$ and $r+\tau$.
To do this, we first consider the number of times that an agent $a_\ell$ causes Cases (1) or (2) between $r$ and $r+\tau$.
We distinguish two cases: $a_\ell$ is a good agent, and $a_\ell$ is a Byzantine agent.

\textbf{Case $a_\ell$ is a good agent}:
When $a_\ell$ joins a member of the group with $a_i$ for the first time, $a_i.R\subset \mathit{TmpR}$ holds, and thus Case (1) occurs by $a_\ell$.
On the other hand, after becoming the member, $\mathrm{detectAnomaly}(a_i,a_\ell)=\mathrm{detectAnomaly}(a_\ell,a_i)=\mathit{False}$ in the next round from Lemma \ref{lem:IfGoodBelongToSameGroupTheyDoNotDetectAnAnomalyOfEachOther}.
Also, from Lemma \ref{lem:NoGoodExecuteResettingTwitForAnotherGood}, $a_i$ and $a_\ell$ do not execute $a_i.T_{wit}[a_i.id][a_\ell.id]\gets 1$ and $a_\ell.T_{wit}[a_\ell.id][a_i.id]\gets 1$ in any round on or after round 2.
Thus, $a_\ell$ does not leave $a_i$; Cases (1) and (2) do not occur by $a_\ell$.
Therefore, in this case, $a_\ell$ causes Case (1) at most once between $r$ and $r+\tau$.

\textbf{Case $a_\ell$ is a Byzantine agent}:
First, we consider that $a_\ell$ causes Cases (1) or (2) by itself.
In this case, $a_\ell$ does so in the following ways:
(a) Agents $a_i$ and $a_\ell$ exist at a single node after belonging to the same group, and 
(b) Agent $a_\ell$ leaves $a_i$ after $a_i$ and $a_\ell$ belong to the same group.
However, for $a_i$ and $a_\ell$ to belong to the same group again, $a_i$ needs $\tau$ rounds after $a_\ell$ leaves.
Thus, $a_\ell$ causes Cases (1) or (2) by itself at most 2 times between $r$ and $r+\tau$.
Next, we consider that $a_\ell$ causes Cases (1) or (2) using another agent $a_m$.
In this case, $a_\ell$ does so in the following ways:
(c) When $a_i$ and $a_\ell$ meet at a single node, $a_i$ and $a_m$ belong to the same group, and $a_\ell$ and $a_m$ trust each other, and 
(d) Agent $a_\ell$ leaves from $a_m$ after $a_i$ and $a_\ell$ belong to the same group by $a_m$ building bridges.
Thus, every time another agent joins $a_i$, $a_\ell$ causes Cases (1) or (2) at most 2 times.
Since the number of agents that could be $a_m$ is at most $k-2$, $a_\ell$ causes Cases (1) or (2) using another agent at most $2(k-2)$ times between $r$ and $r+\tau$.

Finally, we show the number of times that Cases (1) and (2) occur between $r$ and $r+\tau$.
Since the number of good agents other than $a_i$ is $k-f-1$, Cases (1) and (2) occur at most $k-f-1$ times by good agents.
Since the number of Byzantine agents is $f$, Cases (1) and (2) occur at most $2f+2f(k-2)$ times by Byzantine agents.
Hence, the number of times that Cases (1) and (2) occur between $r$ and $r+\tau$ is at most $k-f-1+2f(k-2)<3kf$.
\end{proof}

Next, we prove that two good agents meet within $\tau$ rounds.

\begin{lemma}
\label{lem:TwoGoodMeetDuringTauRounds}
For any round $r$ on or after round 1, any pair of good agents meet between $r$ and $r+\tau$.
\end{lemma}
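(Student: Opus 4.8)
The plan is to argue by contradiction: assume $a_i$ and $a_j$ do not meet at any round in $[r, r+\tau]$, and derive a contradiction by exhibiting a round in this interval at which they are forced to be co-located. The engine is Theorem \ref{the:RendezvousAlgorithm}, which brings two agents together once they run $\mathsf{REN}$ uninterruptedly with \emph{distinct} seeds for long enough; Lemma \ref{lem:TheNumberOfInterrupt} is what guarantees the existence of such an uninterrupted window.

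First I would set $L := \TIME{REN}(2^{\Lambda_g+1})$ and recall that $\tau = (6KF+1)\cdot L$. By Lemma \ref{lem:TheNumberOfInterrupt}, each of $a_i$ and $a_j$ initializes $\mathit{numRound}$ (equivalently, changes its membership set $R$, hence possibly its seed) at most $3kf \le 3KF$ times in $[r, r+\tau]$, so together they do so at most $6KF$ times. Partitioning $[r, r+\tau]$ into $6KF+1$ consecutive blocks of length $L$ and applying the pigeonhole principle, there is a block $B = [\,r+(\ell-1)L,\ r+\ell L\,]$ in which \emph{neither} agent initializes $\mathit{numRound}$. Since a good agent is always reachable from itself in its confidence graph, a reset is triggered whenever $\mathit{TmpR}$ differs from the previous $R$; hence in $B$ each agent keeps a fixed seed and merely increments $\mathit{numRound}$, i.e.\ both execute $\mathsf{REN}$ uninterruptedly. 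Let $s_i \le r+(\ell-1)L$ and $s_j \le r+(\ell-1)L$ denote the rounds at which their current uninterrupted $\mathsf{REN}$ segments began.

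The crucial observation is that the seeds used throughout $B$ are distinct. By the non-meeting assumption, $a_i$ and $a_j$ sit at different nodes in every round of $B$, so their observed groups $\mathit{MG}_i$ and $\mathit{MG}_j$ are disjoint. Since $a_i.\mathit{seed}$ is the smallest ID reachable from $a_i$ in $\mathbf{CG}(\mathit{MG}_i, a_i.T_{wit})$, it is the ID of some agent in $\mathit{MG}_i$, and likewise $a_j.\mathit{seed}$ is the ID of some agent in $\mathit{MG}_j$; as IDs are unique and the two sets are disjoint, $a_i.\mathit{seed} \neq a_j.\mathit{seed}$. Moreover each good agent is reachable from itself, so $a_i.\mathit{seed} \le a_i.id < 2^{\Lambda_g+1}$ and similarly $a_j.\mathit{seed} < 2^{\Lambda_g+1}$; hence $\min(a_i.\mathit{seed}, a_j.\mathit{seed}) < 2^{\Lambda_g+1}$ and, since $\TIME{REN}$ is non-decreasing in its label, $\TIME{REN}(\min(a_i.\mathit{seed}, a_j.\mathit{seed})) \le L$.

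Finally I would apply Theorem \ref{the:RendezvousAlgorithm} to these two segments: the agents meet by round $\max(s_i, s_j) + \TIME{REN}(\min(a_i.\mathit{seed}, a_j.\mathit{seed})) \le (r+(\ell-1)L) + L = r+\ell L$, which lies inside $B \subseteq [r, r+\tau]$, contradicting the non-meeting assumption. I expect the main obstacle to be the bookkeeping that turns ``no reset in $B$'' into a genuine pair of uninterrupted $\mathsf{REN}$ runs to which Theorem \ref{the:RendezvousAlgorithm} applies --- in particular, verifying that each segment began at or before the block (so that $\max(s_i,s_j) \le r+(\ell-1)L$) and ran uninterruptedly through the meeting round, and that the constant $6KF$ in $\tau$ exactly absorbs the reset budget of Lemma \ref{lem:TheNumberOfInterrupt}. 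The seed-distinctness step, by contrast, is the clean idea that makes the distinct-label hypothesis of Theorem \ref{the:RendezvousAlgorithm} automatic as long as the two good agents remain apart.
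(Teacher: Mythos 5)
Your proof is correct and follows essentially the same route as the paper's: both invoke Lemma \ref{lem:TheNumberOfInterrupt} to bound the total number of seed changes and $\mathit{numRound}$ initializations by the two agents at $6kf\le 6KF$, and then use $\tau=(6KF+1)\cdot \TIME{REN}(2^{\Lambda_g+1})$ to extract an uninterrupted window of length at least $\TIME{REN}(\min(a_i.\mathit{seed},a_j.\mathit{seed}))$ to which Theorem \ref{the:RendezvousAlgorithm} applies. Your write-up is in fact more explicit than the paper's, which leaves the pigeonhole step, the distinctness of the two seeds (via disjointness of $\mathit{MG}_i$ and $\mathit{MG}_j$ while the agents are apart), and the bound $\TIME{REN}(\min(a_i.\mathit{seed},a_j.\mathit{seed}))\le \TIME{REN}(2^{\Lambda_g+1})$ implicit.
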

\begin{proof}
Let $a_i$ and $a_j$ be two good agents.
According to Theorem \ref{the:RendezvousAlgorithm}, if $a_i$ and $a_j$ execute the rendezvous algorithm with different labels for $\TIME{REN}(\min(a_i.\mathit{seed},a_j.\mathit{seed}))$ rounds without changing their $\mathit{seed}$ or initializing their $\mathit{numRound}$, they meet during this period.

Without loss of the generality, Lemma \ref{lem:TheNumberOfInterrupt} implies that $a_i$ changes $a_i.\mathit{seed}$ and initializes $a_i.\mathit{numRound}$ at most $3kf$ times during $\tau$ rounds.
Thus, between $r$ and $r+\tau$, $a_i$ and $a_j$ update these variables at most $6kf$ times in total.

Given that $\tau$ is $(6KF+1)\cdot \TIME{REN}(2^{\Lambda_g+1})$, there are at least $\TIME{REN}(\min(a_i.\mathit{seed},a_j.\mathit{seed}))$ rounds between $r$ and $r+\tau$ during which $a_i$ and $a_j$ do not change their $\mathit{seed}$ or initialize their $\mathit{numRound}$.
Therefore, $a_i$ and $a_j$ meet between $r$ and $r+\tau$.
\end{proof}

Finally, we prove the correctness and complexity of $\mathcal{A}_{sspg}(id, n', k', f', \Lambda_g')$.

\begin{theorem}
Let $N$ be the upper bound on the number of nodes, $K$ be the upper bound on the total number of agents, $F$ be the upper bound on the number of Byzantine agents, and $\Lambda_g$ be the length of the largest ID among IDs of good agents.
If $N$, $K$, $F$, and $\Lambda_g$ are given to agents, Algorithm \ref{alg:SSPGAlgorithm} solves the \texttt{SS-PGAT} with global knowledge in $O(K\cdot F\cdot \TIME{REN}(2^{\Lambda_g}))$ rounds.
\end{theorem}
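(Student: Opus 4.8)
The plan is to show that from every initial configuration in $C_{all}$ there is a round $r^{*}=O(\tau)$, with $\tau=(6KF+1)\cdot\TIME{REN}(2^{\Lambda_g+1})$, after which all good agents permanently form a single group: co-located, sharing the same $\mathit{seed}$ and $\mathit{numRound}$, and therefore executing $\mathsf{REN}(\mathit{seed},\mathit{numRound})$ identically and moving along the same edge in every subsequent round. This is exactly the \texttt{SS-PGAT} requirement. Since $\TIME{REN}(2^{\Lambda_g+1})=O(\TIME{REN}(2^{\Lambda_g}))$ (the label length grows by one bit), any constant multiple of $\tau$ is $O(K\cdot F\cdot\TIME{REN}(2^{\Lambda_g}))$, so it suffices to drive $r^{*}$ to a constant multiple of $\tau$.

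First I would establish a \emph{mutual-trust} phase. Fix two good agents $a_i$ and $a_j$. By Lemma~\ref{lem:TwoGoodMeetDuringTauRounds} applied with $r=1$, they are co-located at some round at most $1+\tau$; at the first such co-location $a_j.id$ enters (or already lies in) the row $a_i.T_{wit}[a_i.id]$, so the round $r_{ini}^{j}$ of Corollary~\ref{cor:GoodTrustAnotherGoodAfterItAddThatToTwit} satisfies $r_{ini}^{j}\le 1+\tau$. Corollary~\ref{cor:GoodTrustAnotherGoodAfterItAddThatToTwit} then yields $a_i.T_{wit}[a_i.id][a_j.id]=\tau$ by round $r_{ini}^{j}+\tau\le 1+2\tau$, and Lemma~\ref{lem:NoGoodExecuteResettingTwitForAnotherGood} forbids any reset of this entry after round $2$, so it remains $\tau$. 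The symmetric statement holds for $a_j$. Hence there is a round $r_1=1+2\tau=O(\tau)$ by which, for every ordered pair of good agents, the corresponding $T_{wit}$ entry equals $\tau$ in both directions.

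Next I would convert mutual trust into a permanent common group. Applying Lemma~\ref{lem:TwoGoodMeetDuringTauRounds} again with $r=r_1$, the agents $a_i$ and $a_j$ are co-located at some round $r_{ij}\in[r_1,r_1+\tau]$. Being at one node, they share $T_{wit}$ by Observation~\ref{obs:AgentsAtSameNodeHaveSameTwit}, so $a_i.T_{wit}[a_i.id][a_j.id]=a_i.T_{wit}[a_j.id][a_i.id]=\tau$, and by Definition~\ref{def:ConfidenceGraph} the edge $(a_i,a_j)$ is present in $\mathbf{CG}(\mathit{MG}_i,a_i.T_{wit})$; thus they lie in the same group at the end of $r_{ij}$. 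By Lemma~\ref{lem:IfGoodBelongToSameGroupTheyDoNotDetectAnAnomalyOfEachOther} used inductively with Lemma~\ref{lem:IfGoodsExistAtSameNodeTheyMakeSameConfidenceGraph}, once two good agents are in the same group they remain co-located with consistent state, hence in the same group, in every later round. Taking $r^{*}=r_1+\tau$, by round $r^{*}$ every pair of good agents is permanently in the same group; since membership in a group is connectivity in a single confidence graph, pairwise grouping forces all good agents to a common node, where the confidence graph restricted to the good agents is complete, so they constitute one group, compute one $\mathit{seed}$ and one $\mathit{numRound}$, and move identically thereafter. This is perpetual gathering, attained by $r^{*}=1+3\tau=O(K\cdot F\cdot\TIME{REN}(2^{\Lambda_g}))$.

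The main obstacle I expect is the interaction between the arbitrary initial memory and the trust dynamics during the mutual-trust phase. A good agent may start with an arbitrary sub-$\tau$ value of $T_{wit}$ for another good agent, so a meeting need not trigger the instantaneous full trust of UV-Case~(2), and Byzantine agents may keep perturbing group membership while trust is still being rebuilt. The argument survives only because three bounds line up over the same $\tau$-window: Corollary~\ref{cor:GoodTrustAnotherGoodAfterItAddThatToTwit} caps the rebuild time by $\tau$, Lemma~\ref{lem:NoGoodExecuteResettingTwitForAnotherGood} forbids resets between good agents after round $2$, and $\tau$ is chosen (via the factor $6KF$) large enough that, despite the at most $6kf\le 6KF$ interruptions counted in Lemma~\ref{lem:TheNumberOfInterrupt}, Lemma~\ref{lem:TwoGoodMeetDuringTauRounds} still forces a re-meeting inside every $\tau$-window. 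Making these estimates agree is the delicate part; the remaining points are routine, namely that equal $\mathit{seed}$ and $\mathit{numRound}$ force identical moves (so ``same group'' implies ``moving together''), and that the entire analysis assumes a simultaneous start, an assumption treated separately.
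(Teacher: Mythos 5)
Your proposal is correct and follows essentially the same two-phase argument as the paper's proof: first establish that every ordered pair of good agents has $T_{wit}$ entry $\tau$ within $O(\tau)$ rounds (via Lemma~\ref{lem:TwoGoodMeetDuringTauRounds}, Corollary~\ref{cor:GoodTrustAnotherGoodAfterItAddThatToTwit}, and Lemma~\ref{lem:NoGoodExecuteResettingTwitForAnotherGood}), then re-apply Lemma~\ref{lem:TwoGoodMeetDuringTauRounds} to force a meeting under mutual trust and use Lemma~\ref{lem:IfGoodBelongToSameGroupTheyDoNotDetectAnAnomalyOfEachOther} to keep the group together permanently. The only difference is bookkeeping of constants (you reach $1+3\tau$ where the paper reaches $2\tau+1$, since you bound $r_{ini}^{j}$ by the first meeting round rather than by round $2$), which does not affect the $O(K\cdot F\cdot \TIME{REN}(2^{\Lambda_g}))$ bound.
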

\begin{proof}
First, we prove that $a_i.T_{wit}[a_i.id][a_j.id]=\tau$ and $a_j.T_{wit}[a_j.id][a_i.id]=\tau$ hold from $\tau+2$ onwards for any pair $a_i,a_j$ of good agents.
From Lemma \ref{lem:TwoGoodMeetDuringTauRounds}, $a_i$ and $a_j$ meet between rounds $2$ and $\tau+1$.
Therefore, $a_i\in a_j.T_{wit}[a_j.id]$ and $a_j\in a_i.T_{wit}[a_i.id]$ hold at the beginning of $\tau+2$.
If $a_i$ (resp. $a_j$) meets $a_j$ (resp. $a_i$) for the first time, $a_i.T_{wit}[a_i.id][a_j.id]=\tau$ holds (resp. $a_j.T_{wit}[a_j.id][a_i.id]=\tau$ holds).
Otherwise, $a_i\in a_j.T_{wit}[a_j.id]$ (resp. $a_j\in a_i.T_{wit}[a_i.id]$) has already held at the beginning of $\tau+2$, and $a_i.T_{wit}[a_i.id][a_j.id]=\tau$ holds (resp. $a_j.T_{wit}[a_j.id][a_i.id]=\tau$ holds) by $\tau+2$ from Corollary \ref{cor:GoodTrustAnotherGoodAfterItAddThatToTwit}.
From Lemma \ref{lem:NoGoodExecuteResettingTwitForAnotherGood}, after $a_i.T_{wit}[a_i.id][a_j.id]=\tau$ holds (resp. $a_j.T_{wit}[a_j.id][a_i.id]=\tau$ holds), $a_i$ (resp. $a_j$) does not execute $a_i.T_{wit}[a_i.id][a_j.id]\gets 1$ (resp. $a_j.T_{wit}[a_j.id][a_i.id]\gets 1$).
Hence, $a_i.T_{wit}[a_i.id][a_j.id]=\tau$ and $a_j.T_{wit}[a_j.id][a_i.id]=\tau$ hold from $\tau+2$ onwards.

From Lemma \ref{lem:TwoGoodMeetDuringTauRounds} and the above discussion, $a_i$ and $a_j$ meet by $2\tau+1$ and belong to the same group in this round.
From Lemma \ref{lem:IfGoodBelongToSameGroupTheyDoNotDetectAnAnomalyOfEachOther}, $a_i$ and $a_j$ stay at a single node from $2\tau+1$ onwards.
Given that $\tau$ is $(6KF+1)\cdot \TIME{REN}(2^{\Lambda_g+1})$, this theorem holds.
\end{proof}

\section{Discussion}
\label{sec:Discussion}
In Section \ref{sec:SS-PG}, we present an algorithm to solve the \texttt{SS-PGAT} without global knowledge.
Here, we discuss (1) removing the assumption of the simultaneous startup, and (2) the memory space required for each agent in this section.

\paragraph*{Removing the assumption of simultaneous startup}
For the execution without simultaneous startup, let the first round, denoted as round $1$, be the round in which the first agent $a$ is selected by the adversary and starts the algorithm.
A similar discussion as Lemma \ref{lem:TwoGoodMeetDuringTauRounds} ensures that within at most $\tau + 1$ rounds, the other good agents meet $a$ and start the algorithm. 
This fact holds since dormant agents stay at the node where they are located, and in the first $\tau$ rounds, there are at least $\TIME{REN}(a.\mathit{seed})$ rounds during which $a$ does not change $a.\mathit{seed}$ and does not initialize $a.\mathit{numRound}$. 
This observation ensures that the good agents start the algorithm within at most $\tau + 1$ rounds. 
After all good agents start the algorithm, the self-stabilizing property of the algorithm guarantees that the good agents successfully meet at a node and subsequently stay at a node within $2\tau + 1$ rounds. 
Therefore, without the assumption of the simultaneous startup, the algorithm solves the \texttt{SS-PGAT} within $3\tau + 1$ rounds, which is asymptotically the same as the algorithm with simultaneous startup. 

\paragraph*{Memory Space}
To analyze the memory space, we consider the maximum size of each variable.
\begin{itemize}
  \item For Variable $\mathit{numRound}$, since it stores at most $\TIME{REN}(2^{\Lambda_g+1})$, its maximum size is given by $\log (\Tilde{O}(N^5\cdot \Lambda_g))$.
  \item For Variable $\mathit{seed}$, since it stores at most ID $2^{\Lambda_g+1}$, its maximum size is given by $\Lambda_g+1$.
  \item For Variable $T_{wit}$, the indices of its first dimension accommodate at most $k$ IDs, while those of the second dimension store at most $2^{\Lambda_g+1}$ IDs. 
  Every element stores at most $(6KF+1)\cdot \TIME{REN}(2^{\Lambda_g+1})$. 
  Thus, its maximum size is $O(k\cdot 2^{\Lambda_g}\cdot\log (K\cdot F\cdot \TIME{REN}(2^{\Lambda_g}))$.
  \item For Variable $R$, it stores at most $k$ IDs, thus its maximum size is $O(k\cdot \Lambda_g)$.
\end{itemize}
In summation, the required memory space is $O(k\cdot 2^{\Lambda_g}\cdot\log (K\cdot F\cdot \TIME{REN}(2^{\Lambda_g}))$, and this result is primarily determined the maximum size of Variable $T_{wit}$.

The memory space can be reduced by additional operations.
For each index $i$ of the first dimension of $T_{wit}$, an agent ensures that indices of $T_{wit}[i]$ comprise at most $K$ IDs.
More concretely, when an agent stores a new ID in the indices of $T_{wit}[i]$, it deletes the ID of the least seen agent from indices of $T_{wit}[i]$ if the number of indices of $T_{wit}[i]$ exceeds $K$.
This operation allows agents to eliminate the IDs of non-existent agents introduced by incoherence, and thus the maximum size of $T_{wit}$ becomes $O(k\cdot K\cdot\log (K\cdot F\cdot \TIME{REN}(2^{\Lambda_g}))$, that is, becomes smaller.
Thus, memory space can be lowered to $O(k\cdot K\cdot\log (K\cdot F\cdot \TIME{REN}(2^{\Lambda_g}))$.

\section{Conclusion}
In this paper, we presented two impossibility results regarding the solvability of the \texttt{GAT} and \texttt{SS-GAT}, and proposed an algorithm to solve the \texttt{SS-PGAT}.
We thoroughly analyzed the global knowledge required to solve the \texttt{GAT} and \texttt{SS-GAT}, examining each case with and without such knowledge.
Based on the impossibility results, we introduced the \texttt{PGAT} as a relaxed variant of the \texttt{GAT} and investigated the global knowledge required for this problem as well.

Our findings show that no algorithm can solve the \texttt{GAT} without global knowledge, and no algorithm can solve the \texttt{SS-GAT} for any $k$ and $f$ with $n$, $k$, $f$, and $\Lambda_g$. 
On the other hand, for the \texttt{SS-PGAT}, we provided an algorithm that solves the problem with $N$, $K$, $F$, and $\Lambda_g$, where these values represent the upper bounds of $n$, $k$, and $f$, respectively.
The time complexity of our proposed algorithm is $O(K\cdot F\cdot \TIME{REN}(2^{\Lambda_g}))$.

Despite these results, the solvability of the \texttt{SS-PGAT} without global knowledge remains an open problem.

\section*{Acknowledgement}
This work was supported by JSPS KAKENHI Grant Numbers JP20KK0232, JP22H03569, JP25K03078, JP25K03079, and JP25K03101
and JST FOREST Program JPMJFR226U. 
\bibliographystyle{unsrt}

\end{document}